\newcounter{subeqn} 
\renewcommand{\thesubeqn}{\theequation\alph{subeqn}}%
\newcommand{\subeqn}{%
  \refstepcounter{subeqn}
  \tag{\thesubeqn}
}
\newlength{\noteWidth}
\long\def\notes#1{\ifinner
          {\footnotesize #1}
          \else
          \marginpar{\parbox[t]{\noteWidth}{\raggedright\footnotesize #1}}
      \fi\typeout{#1}}
\def\notes#1{}
\def\spm#1{\notes{SPM:  #1}}
\def\Prob{{\sf P}} 
\def\Expect{{\sf E}}
\def\epsy{\varepsilon}
\def\varble{\,\cdot\,}
\newcommand{\field}[1]{\mathbb{#1}}
\def\Re{\field{R}}
\def\Co{\field{C}}
\def\One{\mbox{\rm{\large{1}}}}
\def\Zero{\mbox{\rm{\large{0}}}}
\def\transpose{{\hbox{\it\tiny T}}}
\newcounter{rmnum}
\newenvironment{romannum}{\begin{list}{{\upshape (\roman{rmnum})}}{\usecounter{rmnum}
\setlength{\leftmargin}{22pt}
\setlength{\rightmargin}{12pt}
\setlength{\itemsep}{2pt}
\setlength{\itemindent}{1pt}
}}{\end{list}}
\def\barq{\bar q}
\def\tilA{\widetilde{A}}
\def\tilS{\widetilde{S}}
\def\state{{\sf X}}
\def\clB{{\cal B}}
\def\clE{{\cal E}}
\def\clF{{\cal F}}
\def\clG{{\cal G}}
\def\clM{{\cal M}}
\def\clW{{\cal W}}
\def\clX{{\cal X}}
\def\bfmath#1{{\mathchoice{\mbox{\boldmath$#1$}}%
{\mbox{\boldmath$#1$}}%
{\mbox{\boldmath$\scriptstyle#1$}}%
{\mbox{\boldmath$\scriptscriptstyle#1$}}}}
 \def\bfmA{\bfmath{A}}
\def\bfmB{\bfmath{B}}
\def\bfmD{\bfmath{D}}
\def\bfmG{\bfmath{G}}
\def\bfmN{\bfmath{N}}  
\def\bfmQ{\bfmath{Q}}
\def\bfmS{\bfmath{S}}
\def\bfmW{\bfmath{W}}
\def\bfmX{\bfmath{X}}
\def\bfPsi{\bfmath{\Psi}}
\def\bfzeta{\bfmath{\zeta}}
\def\bfDelta{\bfmath{\Delta}}
\def\Cov{\text{Cov}\,}   
\def\diag{\,\text{\rm diag} }
\def\Ebox#1#2{%
\begin{center}
\includegraphics[width= #1\hsize]{#2} 
\end{center}}
\def\eqdef{\mathbin{:=}}
 \def\FRAC#1#2#3{\genfrac{}{}{}{#1}{#2}{#3}}
\def\half{{\mathchoice{\FRAC{1}{1}{2}}%
{\FRAC{1}{1}{2}}%
{\FRAC{3}{1}{2}}%
{\FRAC{3}{1}{2}}}}
\def\tilO{\widetilde O}
\def\Fig#1{Fig.~\ref{#1}}
\newtheorem{theorem}{Theorem}[section]
\newtheorem{proposition}[theorem]{Proposition}
\newtheorem{lemma}[theorem]{Lemma}
\def\Lemma#1{Lemma~\ref{#1}}
\def\Proposition#1{Proposition~\ref{#1}}
\def\Theorem#1{Theorem~\ref{#1}}
\def\Section#1{Section~\ref{#1}}
\def\Appendix#1{Appendix~\ref{#1}}
\def\Rtot{R^{\text{\tiny tot}}}
\def\haRtot{\widehat{R}^{\text{\tiny tot}}}
\def\nudist{\gamma}
\def\Sigmatot{\Sigma^{\text{\tiny tot}}}
\def\tilpi{{\tilde \pi}}
\def\haf{{\hat f}}
\def\haI{{\hat I}}
\def\hapi{{\hat\pi}}
\def\haSigma{{\widehat{\Sigma}}}
\def\haSigmatot{\widehat{\Sigma}^{\text{\tiny tot}}}
\def\ind{\field{I}}
\def\Re{\field{R}}
\def\nat{\field{Z}_+}
\def\ZZ{\field{Z}}
\def\diagpiload{\Lambda^{\Gamma}}
\def\piload{\Gamma}
\def\bfpiload{\bfmath{\Gamma}}
\def\psd{\text{\rm S}}
\def\psdder{\text{\textrm{S}}^{\text{\tiny(2)}}}
\def\As#1{Assumption~\textbf{#1}}
\def\SF{H}  
\def\KL{\text{\rm D}}  
\def\haKL{\widehat{\text{\rm D}}}
\def\probA{\psi_a}
\def\probB{\psi_b}
\def\LpB{\Lambda_b}
\def\haR{\widehat R}
\def\hapsd{\widehat{\text{\rm S}}}
\def\Prop#1{Prop.~\ref{#1}}
\title{Ergodic Theory
for Controlled Markov Chains
\\
with Stationary Inputs}
\author{Yue Chen, 
Ana Bu\v{s}i\'c, and Sean Meyn
\thanks{This research is supported by the NSF grants CPS-0931416 and CPS-1259040, the French National Research Agency grant ANR-12-MONU-0019, and US-Israel BSF Grant 2011506.}
\thanks{
Y.C. and S.M. are with the Department of Electrical and Computer
Engg.\ at the University of Florida, Gainesville. A.B.\ is with Inria and the Computer Science Dept. of \'Ecole Normale Sup\'erieure, Paris, France.}%
}
\begin{document}

\maketitle
\begin{abstract} 
 Consider a stochastic process $\bfmX$ on a finite state space $ \state=\{1,\dots, d\}$. 
 It is conditionally Markov, given a real-valued  `input process' $\bfzeta$. This
is assumed to be small,  which is modeled through the scaling,
\[
\zeta_t = \epsy \zeta^1_t, \qquad 0\le \epsy \le 1\,,
\]
where $\bfzeta^1$ is a bounded stationary process.  The following conclusions are obtained, subject to smoothness assumptions on the controlled transition matrix and   a   mixing condition on $\bfzeta$:
\begin{romannum}
\item 
A stationary version of the process is constructed, that is coupled with a stationary version of the Markov chain $\bfmX^\bullet$ obtained with $\bfzeta\equiv 0$.  The triple $(\bfmX, \bfmX^\bullet,\bfzeta)$ is a jointly stationary process 
satisfying  
\[
\Prob\{X(t) \neq X^\bullet(t)\} = O(\epsy)
\]
Moreover, a second-order Taylor-series approximation is obtained:
\[
\Prob\{X(t) =i \}  =\Prob\{X^\bullet(t) =i \}  + \epsy^2 \varrho(i) + o(\epsy^2),\quad 1\le i\le d,
\]
with an explicit formula for the vector $\varrho\in\Re^d$.   
\item 
For any $m\ge 1$ and any function $f\colon \{1,\dots,d\}\times \Re\to\Re^m$,  the stationary stochastic
process $Y(t) = f(X(t),\zeta(t))$ has a power spectral density $\psd_f$ that admits   a second order Taylor series expansion:
A function $\psdder_f\colon [-\pi,\pi] \to \Co^{ m\times m}$ is constructed such that 
\[
\psd_f(\theta) = \psd^\bullet_f(\theta) + \epsy^2 \psdder_f(\theta) + o(\epsy^2),\quad \theta\in [-\pi,\pi] .
\]
An explicit formula for the function $\psdder_f$ is obtained, based in part on the bounds in (i).
\end{romannum}
The results are illustrated using a version of the  timing channel of  Anantharam and Verdu.


\end{abstract}

%
\clearpage

\section{Introduction}

This paper concerns second-order ergodic theory for a controlled Markov chain.  Consider for the sake of illustration a stochastic process $\bfmX$ on a finite state space $ \state=\{1,\dots, d\}$,  which evolves together with a real-valued stationary sequence $\bfzeta$ and an i.i.d.\ sequence $\bfmN$
according to the recursion,
\begin{equation}
X_{t+1} =\varphi(X_t,\zeta_t, N_{t+1}),\qquad t\in \ZZ
\label{e:nonlinSS}
\end{equation}
where $\varphi\colon\state\times\Re^2\to \state$ is Borel measurable.
In the special case $\bfzeta\equiv 0$ we denote the solution $\bfmX^\bullet$, which is a time-homogeneous Markov chain.

  Among the questions we might ask are,
\begin{romannum}
\item If we are given only the function $\varphi$,  the stationary process
$\bfzeta$, and the i.i.d.\ sequence $\bfmN$, does there exist a stationary solution to \eqref{e:nonlinSS}?

\item
Can we compute statistics of $\bfmX$, such as the marginal distribution?

\item  If the answer to (ii) is no, can we \textit{approximate} the statistics of $\bfmX$?   How do these statistics compare with the stationary Markov chain $\bfmX^\bullet$?
\end{romannum}
Under the assumptions imposed in this paper, we construct the joint stationary process  $(\bfmX ,\bfmX^\bullet,\bfzeta)$ on the same probability space.

An answer to (ii) requires special assumptions on the model that are far beyond the scope of this paper.  Instead we consider a setting in which $\bfzeta$ is small, and obtain approximations to address question (iii).  It is simplest to consider a family of processes, parameterized by a small constant $\epsy>0$,
\[
\zeta_t = \epsy \zeta^1_t,
\]
where $\bfzeta^1=\{\zeta^1_t\}$ is a bounded sequence.  The construction of  $(\bfmX ,\bfmX^\bullet,\bfzeta)$ is obtained so that for each $t$,
\begin{equation}
\Prob\{X_t\neq X_t^\bullet\} = O(\epsy)
\label{e:tilOX}
\end{equation}
This then provides tools to address (iii).

In order to apply techniques from second-order statistics, the process is lifted to the 
unit simplex in $\Re^d$.  We denote 
\begin{equation}
\piload_t = e^j,  \qquad \text{when $X_t=j$,}
\label{e:piloadX}
\end{equation}
where $e^j$ denotes the $j$th standard basis element in $\Re^d$.   
This is a standard construction;
it is useful since the evolution of $\{\piload_t \}$ can be expressed as a linear state space model driven by an uncorrelated ``noise process'' (see \eqref{e:piP} below).   
This linear representation is used in   \cite{lipkrirub84} to construct a Kalman filter for a time-homogeneous Markov chain (without the input $\bfzeta$), and these results are extended to a class of controlled Markov chains in \cite{chebusmey15}.

The initial motivation for  \cite{chebusmey15}, as well as the research described here, is application to distributed control for the purposes of ``demand dispatch'' using distributed resources in a power grid.      The results of the present paper are applied in   \cite{chebusmey15c} to obtain performance approximations in the same power grid model.  Similar bounds were previously obtained in \cite{chebusmey14}, but this is the first paper to obtain an exact second-order Taylor series approximation for second-order statistics.  
\spm{should we be more specific?  Our 2014 paper did not include complete proofs, and did not provide a Taylor series approximation because the second order term  $ V_t^\transpose \zeta_t^2 $ in $D_{t+1}$ 
was neglected:  see \Prop{t:piload_linear}
}

\notes{
See this paper Raginsky suggested,  cite{berwu98}.
Just found this on flight to Banff: cite{linlem04}.  Useful?
}

 \notes{ 
Kalman filtering in triplet Markov chains  (IEEE Signal processing)
Generalized dynamic linear models for financial time series
 ... and ...
 Diffusion Approximation for Bayesian Markov Chains}

The main contribution of this paper is to obtain tight approximations for the joint auto-correlation function for $(\bfpiload,\bfzeta)$, and hence also its power spectral density.   To obtain these results requires the coupling bound \eqref{e:tilOX}, 
a second order Taylor series expansion for $\pi^\epsy = \Expect[\piload_t]$ in steady-state,
and surprisingly complex calculations for a linearized model.  

We are not aware of other related results in the literature.  However,   in some of our approximations we borrow one technique from  \cite{sch68} -- the use of the fundamental matrix appears in the approximation of $\pi^\epsy$;  see \eqref{e:funMatrix}
for a definition, and further explanation following this equation.

The main results are summarized in \Section{s:main}, with all of the technical proofs contained in an appendix.  \Section{s:ex} contains numerical results for an application to information theory -- a variant of the timing channel introduced in  \cite{anaver06}.  Conclusions and directions for future research are contained in
\Section{s:conc}.

\section{Model and main results}
\label{s:main}

Consider an irreducible and aperiodic Markov chain $\bfmX^\bullet$ evolving on a finite state space $\state=\{1,\dots, d\}$, and transition matrix $P_0$.   This admits a stationary realization on the two sided time-interval $\ZZ$,  whose marginal distribution $\pi_0$ is the invariant probability mass function (pmf) for $P_0$, satisfying $\pi_0 P_0 = \pi_0$. 
The goal of this paper is to investigate how the statistics change when the dynamics are subject to an exogenous disturbance.  

\subsection{Controlled Markov model}

The stochastic process $\bfmX$  considered in this paper also evolves on the finite state space $ \state$.   The ``disturbance'' in the controlled model is a one-dimensional stationary process denoted  $\bfzeta=\{ \zeta_t : -\infty < t<\infty\}$.   A controlled transition matrix $\{P_\zeta : \zeta\in\Re\}$ describes the dynamics of the process:
\begin{equation}
\Prob\{ X_{t+1}=k \mid  \zeta_s,  \  X_s : s\le t\} = P_{\zeta}(j,k) \qquad \text{when $\zeta_t=\zeta$, $X_t=j$.}
\label{e:Xdynamics}
\end{equation}
It is assumed that $P_\zeta$ is a smooth function of $\zeta$, and that
  $P_0$ is the transition matrix for $\bfmX^\bullet$.

\begin{wrapfigure}{r}[-3pt]{.3\hsize}
\vspace{-.25em}
\Ebox{.9}{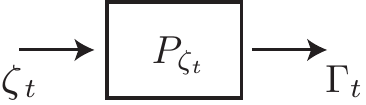}
\vspace{-.5em}
\caption{Controlled Markov model.}
\label{f:mdp}
\end{wrapfigure}

Since $\bfmX$ is no longer Markovian,  we cannot apply standard Markov chain theory to investigate properties of a stationary version of $\bfmX$.  Instead we apply linear systems theory, and for this we require a linear systems representation for the controlled stochastic process.  

This is obtained by embedding the process in $\Re^d$ through the indicator process
$\piload_t$ defined in \eqref{e:piloadX}.  Linear dynamics are obtained by considering a specific realization of the model.  We assume that there is a $d\times d$ matrix-valued function $\clG$ and an i.i.d.\ sequence $\bfmN$ for which,
\begin{equation}
	\piload_{t+1}=\piload_t G_{t+1} , \quad G_{t+1} = \clG(\zeta_t, N_{t+1})
\label{e:piG}
\end{equation}
It is assumed moreover that $\bfmN$ is independent of $\bfzeta$,  and that the entries of $\clG$ are zero or one, with
\[
\sum_{k=1}^d \clG_{j,k}(z,n) =1,  \qquad \text{for all $j, z, n$.}
\]
It follows from \eqref{e:Xdynamics} that for each $t$,
\begin{equation}
\label{e:ExpectG=P}
\Expect[\clG(\zeta_t, N_{t+1}) \mid \zeta_{-\infty}^\infty ] = P_{\zeta_t} \, .
\end{equation}

The random linear system \eqref{e:piG} illustrated in \Fig{f:mdp} is the focus of study in this paper.  The sequence $\bfpiload=\{ \piload_t \}$ is viewed as a state process, that is driven by the disturbance (or ``input'')  $\bfzeta$.  The state process evolves on the extreme points of the unit simplex in $\Re^d$.
We let $\bfpiload^\bullet =\{ \piload^\bullet_t \}$ denote the stationary Markov chain obtained with $\bfzeta \equiv 0$.



\notes{We wrote, It is assumed that $\bfpiload^\bullet $ is independent of $\bfzeta$.  This actually follows from our construction -- it isn't assumed}

Our main assumptions   are summarized in the following:
\begin{romannum}
\item[\textbf{A1:}]
The transition matrix $P_0$ is irreducible and aperiodic. 
 The matrix valued function $P_\zeta$ is twice continuously differentiable ($C^2$) in a neighborhood of $\zeta=0$,    and   the second derivative is Lipschitz continuous.

\item[\textbf{A2:}]
$\zeta_t = \epsy \zeta_t^1$ where $\bfzeta^1=\{\zeta^1_t : t \in \ZZ\}$ is a   real-valued stationary stochastic process with zero mean. The following additional assumptions are imposed:
\begin{romannum}
\item It is bounded,  $|\zeta^1_t|\le 1$ for all $t$ with probability one.  Hence      $\sigma_{\zeta^1}^2=\Expect[(\zeta^1_t)^2]\le 1$.
\item  Its auto-covariance is absolutely summable:
\[
\sum_{t=0}^\infty |R_{\zeta^1}(t)| <\infty
\]

\end{romannum}  
\end{romannum}
The power spectral density $\psd_{\zeta^1}$ exists and is continuous under
\As{A2}~(ii).  It also admits a spectral factor, denoted $\SF_{\zeta^1}$:
\begin{equation}
\psd_{\zeta^1}(\theta) = \SF_{\zeta^1}(e^{j\theta})\SF_{\zeta^1}(e^{-j\theta}),\quad -\pi\le \theta\le \pi.
\label{e:spectralFactor}
\end{equation}
See \cite{cai88} for background.

\As{A1} is used to obtain the approximation of \eqref{e:piG} by an LTI (linear time invariant) system.   The following intermediate step is used in   \cite{lipkrirub84} to obtain a Kalman filter for an uncontrolled Markov chain.
The proof of \Proposition{t:CovDecomp} is given in \Appendix{t:proof_CovDecomp}.

\begin{proposition}
\label{t:CovDecomp}
The random linear system \eqref{e:piG} can be represented as 
\begin{equation}
	\piload_{t+1}=\piload_t P_{\zeta_t} + \Delta_{t+1}, 
\label{e:piP}
\end{equation}
where $\Delta_{t+1} = \piload_t(G_{t+1}-P_{\zeta_t})$.  This is  a martingale difference sequence, 
with covariance matrix
\begin{equation}
\Sigma^\Delta = 
	 \Cov ( \piload_t[G_{t+1}-P_{\zeta_t}] )
	=
	\Expect\bigl[
			\diag( \piload_{t+1} )  
				- P_{\zeta_t}^\transpose \diagpiload_t P_{\zeta_t}
\bigr]
\label{e:smoothedSigmaDelta}
\end{equation}
where $\diagpiload_t$ is the diagonal matrix with diagonal
entries $\{\piload_t(x^i) :1\le i\le d\}$.
Moreover, 
\begin{equation}
R_{\Delta,\zeta}(t) = 0, \quad \text{for all } t.
\label{e:RDz0}
\end{equation}
That is, $\bfDelta$ and $\bfzeta$ are uncorrelated. 
\qed
\end{proposition}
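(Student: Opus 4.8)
The plan is to verify the four assertions in order, the whole argument hinging on the right choice of filtration. I would work with the two-sided stationary realization and set $\clF_t \eqdef \sigma(\bfzeta,\, N_s : s\le t)$, the $\sigma$-field generated by the \emph{entire} disturbance path together with the driving noise up to time $t$. In this construction $\piload_t$ and $\zeta_t$ are $\clF_t$-measurable, while $N_{t+1}$ is independent of $\clF_t$. The representation \eqref{e:piP} is then immediate: one simply adds and subtracts $\piload_t P_{\zeta_t}$ in \eqref{e:piG}, so that $\Delta_{t+1}=\piload_t(G_{t+1}-P_{\zeta_t})$ by definition. For the martingale-difference property I would compute $\Expect[\Delta_{t+1}\mid \clF_t]$: since $\piload_t,\zeta_t$ are $\clF_t$-measurable and $N_{t+1}$ is independent of $\clF_t$, assumption \eqref{e:ExpectG=P} upgrades to $\Expect[\clG(\zeta_t,N_{t+1})\mid \clF_t]=P_{\zeta_t}$ (conditioning further on the $\clF_t$-past changes nothing, as $N_{t+1}$ is independent of it), whence $\Expect[\Delta_{t+1}\mid \clF_t]=\piload_t(P_{\zeta_t}-P_{\zeta_t})=0$.

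For the covariance I would rewrite $\Delta_{t+1}=\piload_{t+1}-\piload_t P_{\zeta_t}$ and evaluate $\Expect[\Delta_{t+1}^\transpose\Delta_{t+1}\mid\clF_t]$. Expanding the outer product yields four terms, and two elementary facts do all the work: first, because $\piload_{t+1}$ and $\piload_t$ are one-hot, $\piload_{t+1}^\transpose\piload_{t+1}=\diag(\piload_{t+1})$ and $\piload_t^\transpose\piload_t=\diagpiload_t$; second, the conditional means are $\Expect[\piload_{t+1}\mid\clF_t]=\piload_t P_{\zeta_t}$ and $\Expect[\diag(\piload_{t+1})\mid\clF_t]=\diag(\piload_t P_{\zeta_t})$. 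Substituting, the quadratic term and both cross terms each reduce to $P_{\zeta_t}^\transpose\diagpiload_t P_{\zeta_t}$, so with signs $(+,-,-,+)$ three of them collapse to a single $-P_{\zeta_t}^\transpose\diagpiload_t P_{\zeta_t}$, leaving $\diag(\piload_t P_{\zeta_t})-P_{\zeta_t}^\transpose\diagpiload_t P_{\zeta_t}$. Taking total expectation and invoking the tower property $\Expect[\diag(\piload_{t+1})]=\Expect[\diag(\piload_t P_{\zeta_t})]$ produces exactly \eqref{e:smoothedSigmaDelta}.

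The payoff of enlarging the filtration to include all of $\bfzeta$ appears in \eqref{e:RDz0}. Because $\clF_t\supseteq\sigma(\bfzeta)$, every $\zeta_s$ --- at \emph{any} lag, past or future --- is $\clF_t$-measurable, so it can be pulled outside the inner conditional expectation. Hence for all $s,t$,
\[
\Expect[\Delta_{t+1}\,\zeta_s]
 = \Expect\bigl[\,\zeta_s\,\Expect[\Delta_{t+1}\mid\clF_t]\,\bigr]
 = 0 ,
\]
and since $\bfDelta$ and $\bfzeta$ are both zero-mean this is precisely $R_{\Delta,\zeta}\equiv 0$. I expect no deep obstacle here; the only care needed is bookkeeping --- confirming the one-hot identities for $\piload^\transpose\piload$ and verifying that the two-sided stationary construction is consistent with taking $N_{t+1}$ independent of the whole of $\bfzeta$ together with the past, which is exactly what makes the single martingale-difference computation deliver both the covariance formula and the uncorrelatedness at every lag.
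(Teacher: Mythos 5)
Your proposal is correct and follows essentially the same route as the paper: the same enlarged filtration $\clF_t=\sigma(\bfzeta, N_s: s\le t)$, the same conditional-expectation argument for the martingale-difference and uncorrelatedness claims, and the same covariance computation. The only cosmetic difference is that you obtain the key identity via the one-hot relation $\piload_{t+1}^\transpose\piload_{t+1}=\diag(\piload_{t+1})$, whereas the paper derives the equivalent fact $\Expect[(\piload_t G_{t+1})^\transpose\piload_t G_{t+1}]=\diag(\Expect[\piload_{t+1}])$ from the zero--one row structure of $G_{t+1}$.
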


We next apply the second-order Taylor series approximation:
\[
P_{\zeta_t} =  P_0 + \clE  \zeta_t + \frac{1}{2} \clW  \zeta_t^2 + O(\epsy^3)\,,
\]
where $\clE$ and $\clW$ denote the first and second  derivatives of $P_{\zeta}$, evaluated at $\zeta=0$:
\[
\left. \frac{d}{d\zeta} P_\zeta \right|_{\zeta=0} = \clE, \qquad \left. \frac{d^2}{d\zeta^2} P_\zeta \right|_{\zeta=0} = \clW.
\]
The $O(\epsy^3)$ bound holds under the Lipschitz condition for the second derivative of $P_\zeta$.

The recursion \eqref{e:piP} can be approximated as
\[
	\piload_{t+1}=\piload_t (P_0 + \clE  \zeta_t + \frac{1}{2} \clW  \zeta_t^2) + \Delta_{t+1} + O(\epsy^3).
\]
This is the LTI approximation:
\begin{proposition}
\label{t:piload_linear}
The system \eqref{e:piG} can be represented by,
\begin{equation}
	\piload_{t+1}=\piload_t P_0 + D_{t+1} + O(\epsy^3)\, ,
\label{e:piD}
\end{equation}
where, $ D_{t+1} = B_t^\transpose \zeta_t + V_t^\transpose \zeta_t^2 + \Delta_{t+1}$, with
\begin{equation}
B_t^\transpose = \piload_t \clE,\qquad V_t^\transpose = \half  \piload_t \clW\, .
\label{e:BV}
\end{equation}\qed
\end{proposition}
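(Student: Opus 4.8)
The plan is to start from the exact representation \eqref{e:piP} supplied by \Proposition{t:CovDecomp} and to substitute the second-order Taylor expansion of $P_{\zeta_t}$ about $\zeta=0$. The only genuine analytic content is showing that the Taylor remainder is uniformly $O(\epsy^3)$; once that is established, the rest is bookkeeping.

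First I would expand each entry of $P_\zeta$ using Taylor's theorem with integral remainder,
\[
P_\zeta = P_0 + \clE\,\zeta + \half\,\clW\,\zeta^2 + R(\zeta), \qquad R(\zeta) = \int_0^\zeta (\zeta-s)\bigl[ P''_s - P''_0 \bigr]\, ds,
\]
where $P''_s$ denotes the (entrywise) second derivative of $P_\zeta$ evaluated at $s$, and $\clE,\clW$ are the first and second derivatives at $\zeta=0$. The Lipschitz hypothesis in \As{A1} gives $\|P''_s - P''_0\| \le L|s|$ entrywise for some constant $L$, and hence $\|R(\zeta)\| \le \frac{L}{6}|\zeta|^3$. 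Because $\zeta_t = \epsy\zeta^1_t$ with $|\zeta^1_t|\le 1$ almost surely by \As{A2}(i), we have $|\zeta_t|\le\epsy$ deterministically, so $\|R(\zeta_t)\| \le \frac{L}{6}\epsy^3 = O(\epsy^3)$ uniformly in $t$ and over the sample path.

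Next I would substitute this expansion into \eqref{e:piP} and distribute $\piload_t$ on the left:
\[
\piload_{t+1} = \piload_t P_0 + \piload_t \clE\,\zeta_t + \half\,\piload_t \clW\,\zeta_t^2 + \piload_t R(\zeta_t) + \Delta_{t+1}.
\]
Setting $B_t^\transpose = \piload_t\clE$ and $V_t^\transpose = \half\,\piload_t\clW$ as in \eqref{e:BV} identifies the first three correction terms with $D_{t+1} = B_t^\transpose\zeta_t + V_t^\transpose\zeta_t^2 + \Delta_{t+1}$. It then remains only to fold $\piload_t R(\zeta_t)$ into the error. Since $\piload_t$ is an extreme point of the simplex, i.e.\ a standard basis vector with $\|\piload_t\|=1$, left-multiplication does not increase the norm, so $\|\piload_t R(\zeta_t)\|\le \frac{L}{6}\epsy^3$ and the residual remains $O(\epsy^3)$, which establishes \eqref{e:piD}.

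The main obstacle, such as it is, lies in ensuring that the $O(\epsy^3)$ bound is genuinely uniform rather than merely pointwise in $\zeta$. This is precisely where \As{A1} and \As{A2} combine: the Lipschitz continuity of the second derivative furnishes a single constant $L$ valid throughout a neighborhood of $\zeta=0$, while the deterministic bound $|\zeta_t|\le\epsy$ from the boundedness of $\bfzeta^1$ keeps $\zeta_t$ inside that neighborhood for all small $\epsy$ and controls the remainder simultaneously over every $t$ and every sample point. Without Lipschitz continuity of the second derivative one would recover only an $o(\epsy^2)$ residual, which is too weak to support the exact second-order Taylor expansions sought later in the paper.
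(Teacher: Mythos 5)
Your proposal is correct and follows essentially the same route as the paper, which obtains \eqref{e:piD} directly from the exact representation \eqref{e:piP} by substituting the second-order Taylor expansion of $P_{\zeta_t}$ and invoking the Lipschitz continuity of the second derivative together with $|\zeta_t|\le\epsy$ to control the remainder. Your version merely makes explicit the integral-remainder estimate and the observation that $\|\piload_t\|=1$, details the paper leaves implicit.
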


Applying the LTI approximation \eqref{e:piD}, an approximation for the auto-correlation of $(\bfpiload, \bfzeta)$ is obtained from an approximation for the pair process  $(\bfmD, \bfzeta)$. Since $\bfmD$ is taken as a row vector, we use the following notation for the auto-correlation of $( \bfmD, \bfzeta )$:
\begin{equation}
R(t) = \begin{bmatrix}  R_D(t) & R_{D,\zeta}(t)
\\ 
R_{D,\zeta}(-t)^\transpose & R_\zeta(t)
\end{bmatrix}
\label{e:corr}
\end{equation}
where $R_D(t) =\Expect[D(t)^\transpose D(0)]$, $R_{D,\zeta}(t) =\Expect[D(t)^\transpose \zeta(0)]$,
and the expectations are taken in steady-state.
 
The \textit{existence} of a steady-state solution is established in \Prop{t:piload-zeta}
 that follows.  

\subsection{Correlation formulae and approximations}

Under Assumptions  we obtain a coupling result, which plays an important role in the approximations that follow.
We write  $
\piload_t = \piload_t^\bullet + \tilO(\epsy)  $ if 
\[
\Expect[ \| \piload_t  - \piload_t^\bullet  \| ]= O(\epsy)\,,
\]
which implies that \eqref{e:tilOX} also holds.
We adopt similar notation for other random variables.  The following result is proven in  \Appendix{s:couple}:

\begin{proposition}
\label{t:piload-zeta}
Under Assumptions~\textbf{A1} and \textbf{A2}, there exists $\epsy_0>0$ such that the following holds for each $\epsy\in (0,\epsy_0]$:
the two process  $\bfpiload$ and $\bfpiload^\bullet$ can be constructed   so that  
$(\bfpiload ,\bfpiload^\bullet,\bfzeta)$ is jointly stationary on the two-sided time interval $\ZZ$,  
$\piload_t^\bullet$   is independent of $\bfzeta$,  and  moreover
\begin{eqnarray} 
\piload_t &=& \piload_t^\bullet + \tilO(\epsy)
\label{e:piloadpiload}
\\[.2cm]
\Expect[\piload_t  \zeta_t ] &=&O(\epsy^2)
\label{e:piloadzeta}
\end{eqnarray}
Consequently, for the stationary process, 
\begin{equation}
B_t = B_t^\bullet + \tilO(\epsy),  \quad V_t = V_t^\bullet + \tilO(\epsy) ,  \quad \Delta_t = \Delta_t^\bullet + \tilO(\epsy)
\label{e:BVDeltaBdds}
\end{equation}
\qed
\end{proposition}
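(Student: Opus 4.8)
The plan is to build $(\bfpiload,\bfpiload^\bullet,\bfzeta)$ by an explicit coupling, driven by a uniform minorization that the two chains inherit from the ergodicity of $P_0$. By \As{A1} the matrix $P_0$ is irreducible and aperiodic on the finite set $\state$, so there are an integer $m\ge 1$ and $\delta>0$ with $P_0^m(i,j)\ge\delta$ for all $i,j$. Since $P_\zeta$ is $C^2$ near $\zeta=0$, we have $\|P_{\zeta_t}-P_0\|=O(\epsy)$ uniformly in $t$, and hence for every input sequence the $m$-step product $P_{\zeta_t}P_{\zeta_{t+1}}\cdots P_{\zeta_{t+m-1}}$ is within $O(\epsy)$ of $P_0^m$ entrywise. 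Shrinking $\epsy_0$, this product is bounded below by $\tfrac12\delta$ entrywise for all $\epsy\le\epsy_0$ and all realizations of $\bfzeta$. This is a Doeblin condition that holds uniformly in the input, and it is the engine both for existence of the stationary law and for the geometric recoalescence used below.

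I would then construct the coupling one step at a time, so that, conditional on $\bfzeta$ and the past, $X_{t+1}^\bullet$ is drawn from $P_0(X_t^\bullet,\cdot)$ using randomness independent of $\bfzeta$, and $X_{t+1}$ is drawn from $P_{\zeta_t}(X_t,\cdot)$ jointly with $X_{t+1}^\bullet$ by a maximal coupling. This preserves both marginals: $\bfmX$ obeys the prescribed law \eqref{e:Xdynamics}, while $\bfmX^\bullet$ is, conditional on $\bfzeta$, a $P_0$-chain whose law does not involve $\bfzeta$, so $\bfpiload^\bullet$ is independent of $\bfzeta$. When $X_t=X_t^\bullet=j$ the maximal coupling keeps them equal with probability $1-\|P_{\zeta_t}(j,\cdot)-P_0(j,\cdot)\|_{TV}=1-O(\epsy)$. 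To obtain a jointly stationary solution on $\ZZ$ I would run this coupling from the past: drive it by the stationary $\bfzeta$ together with a shift-invariant independent noise field, and use the uniform minorization to show that trajectories started from all initial conditions at time $-T$ have coalesced by time $t$ with probability $\to 1$ as $T\to\infty$. The almost sure limit is shift-equivariant, which delivers the jointly stationary triple.

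With the coupling in hand the bounds follow. Let $\chi_t=\ind\{X_t\neq X_t^\bullet\}$. Disagreements are born only when the chains are together and split, an event of probability $O(\epsy)$ per step, and once apart they recoalesce with probability bounded below over each block of $m$ steps, uniformly in $\epsy$ and $\bfzeta$, by the minorization above. Decomposing on the last split before time $t$ gives, in steady state, $\Prob\{\chi_t=1\}\le\sum_{k\ge0}(\text{split }k\text{ steps ago})\times(\text{no recoalescence since})\le c_1\epsy\sum_{k\ge0}C(1-c_2)^{k/m}=O(\epsy)$. Since $\|e^i-e^{i'}\|\le\sqrt2$ when $i\neq i'$, this yields $\Expect\|\piload_t-\piload_t^\bullet\|\le\sqrt2\,\Prob\{\chi_t=1\}=O(\epsy)$, which is \eqref{e:piloadpiload}. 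For \eqref{e:piloadzeta}, independence of $\piload_t^\bullet$ and $\zeta_t$ together with $\Expect[\zeta_t]=0$ gives $\Expect[\piload_t^\bullet\zeta_t]=0$, so $\Expect[\piload_t\zeta_t]=\Expect[(\piload_t-\piload_t^\bullet)\zeta_t]$, and $|\zeta_t|\le\epsy$ with \eqref{e:piloadpiload} bounds this by $\epsy\cdot O(\epsy)=O(\epsy^2)$. Finally \eqref{e:BVDeltaBdds} follows from \eqref{e:BV}: both $B_t-B_t^\bullet=(\piload_t-\piload_t^\bullet)\clE$ and $V_t-V_t^\bullet=\half(\piload_t-\piload_t^\bullet)\clW$ are $\tilO(\epsy)$, and writing $\Delta_{t+1}=\piload_{t+1}-\piload_tP_{\zeta_t}$ gives $\Delta_{t+1}-\Delta_{t+1}^\bullet=(\piload_{t+1}-\piload_{t+1}^\bullet)-(\piload_t-\piload_t^\bullet)P_{\zeta_t}-\piload_t^\bullet(P_{\zeta_t}-P_0)$, each term $\tilO(\epsy)$ by \eqref{e:piloadpiload} and $\|P_{\zeta_t}-P_0\|=O(\epsy)$.

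The hard part is not the algebra of the last step but the simultaneous construction of a jointly stationary, non-Markovian, input-driven triple on $\ZZ$ and the uniform (in $\epsy$ and in the input path) geometric recoalescence that both makes coupling-from-the-past converge and pins the steady-state disagreement probability at $O(\epsy)$ rather than merely $o(1)$. Everything hinges on upgrading the ergodicity of $P_0$ to the input-uniform Doeblin bound above and on checking that the maximal coupling can be realized without letting $\bfmX^\bullet$ see $\bfzeta$; it is the boundedness and zero mean in \As{A2}, rather than the summability in A2(ii), that the $O(\epsy)$ and $O(\epsy^2)$ rates rely on.
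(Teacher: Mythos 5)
Your overall strategy is the same as the paper's: couple the perturbed chain to a nominal chain that never sees $\bfzeta$, reuse the nominal chain's randomness when the two agree (so they split with probability $O(\epsy)$ per step), and balance that splitting rate against a uniformly positive recoalescence rate to pin the stationary disagreement probability at $O(\epsy)$; the derivations of \eqref{e:piloadzeta} and \eqref{e:BVDeltaBdds} from \eqref{e:piloadpiload} are essentially identical to the paper's. The technical choices differ in three places. For the coupling itself, the paper uses an inverse-CDF construction $\clG_{i,j}(\zeta,s)$ driven by shared uniforms, switching to an independent uniform $N^\circ$ when the chains disagree, whereas you use a per-step maximal coupling. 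For existence of the stationary triple on $\ZZ$, the paper invokes a stationary-solution theorem of Foss and Konstantopoulos for the recursion $\Psi_{t+1}=F(\Psi_t,W_{t+1})$, while you propose coupling from the past; yours requires more work (coalescence of all initial pairs) but in exchange would give uniqueness of the stationary version. For the $O(\epsy)$ bound, the paper writes $\Prob\{X_0=X_0^\bullet\}=\Prob\{X_{T_0}=X_{T_0}^\bullet\}$, conditions on agreement at time $0$, and solves the resulting one-line inequality; your last-split geometric-series decomposition proves the same thing with a bit more bookkeeping. Both routes are legitimate.

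One step needs repair. You justify recoalescence within $m$ steps ``by the minorization above,'' i.e.\ by the entrywise lower bound on the $m$-step products of the $P_{\zeta_t}$. But that lower bound on the two marginal $m$-step kernels does not by itself bound the probability that the \emph{coupled} pair meets: under a per-step maximal coupling started from distinct states, the joint law over $m$ steps is not the product of the marginals, and the one-step overlap $\sum_j\min(P_{\zeta_t}(i,j),P_0(i',j))$ can be zero (it is zero in the paper's queueing example, where each row of $P_0$ has support of size two). The fix is exactly what the paper builds in: when $X_t\neq X_t^\bullet$, drive the two chains by independent innovations ($N^\circ$ versus $N^\bullet$), so that the meeting probability after $m$ steps is at least $\sum_j P^{(m)}_\zeta(i,j)\,P_0^m(i',j)\ge d(\delta/2)^2>0$ uniformly in $\epsy\le\epsy_0$ and in the realization of $\bfzeta$. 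With that modification your argument closes, and the same independence is what makes the coupling-from-the-past step work as well.
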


The following strengthening of Assumption~\textbf{A2} is useful in computations:
\begin{romannum}
\item[\textbf{A3:}] 
The transfer function $\SF_{\zeta^1}$ in \eqref{e:spectralFactor} is rational, with distinct poles $\{\rho_1, \ldots, \rho_{n_z} \}$ satisfying $|\rho_i|<1$ for each $i$. 
\end{romannum}
Under  \textbf{A2} and \textbf{A3} the auto-covariance function for $\bfzeta$ 
can be expressed as a sum of geometrically decaying terms,  
\begin{equation}
R_{\zeta}(t) = \epsy^2 \sum_{k=1}^{n_z}a_k\rho_k^{|t|}\, ,
\label{e:R_zeta}
\end{equation}
where the $\{a_k\}$ can be determined from    $\SF_{\zeta^1}$.
Approximations for the auto-correlation functions $R_{D,\zeta}(t)$ and  $R_D(t)$ in \eqref{e:corr} are given in \Theorem{t:R_D}.  

As in the perturbation theory of \cite{sch68}, one component in these approximations is based on the \textit{fundamental matrix}, 
\begin{equation}
U_1 = [I -P_0 +\One \otimes \pi_0]^{-1} 
\label{e:funMatrix}
\end{equation}
where $\One\otimes\pi_0$ denotes the matrix whose rows are identical, and equal to $\pi_0$.   
Because the chain is irreducible and aperiodic, this can be expressed as a power series expansion,
\[
U_1 = I + \sum_{k=1}^\infty [P_0 -\One \otimes \pi_0]^k
\]
The summand can also be expressed $ [P_0 -\One \otimes \pi_0]^k=[P_0^k -\One \otimes \pi_0]  $,  $k\ge 1$.  Hence convergence of the sum follows from the mean ergodic theorem,
\begin{equation}
\lim_{k\to\infty}P_0^k  = \One \otimes \pi_0
\label{e:ergo}
\end{equation}
where the rate of convergence is geometric.

\begin{theorem}
\label{t:R_D}
Suppose that Assumptions \textbf{A1} and \textbf{A2} hold,  and consider the stationary process $(\bfpiload ,\bfpiload^\bullet,\bfzeta)$ constructed in 
\Proposition{t:piload-zeta}, with $\epsy\in (0,\epsy_0]$.  Then,   for each $t$,
\begin{align}
R_{D,\zeta}(t) &= B R_\zeta(t-1) + O(\epsy^3)
  \label{e:R_Dz}
  \\[.2cm] 
R_D(t)
& =  R_{B\zeta}(t)
\refstepcounter{equation}
\subeqn 
\label{e:R_DBz}
\\
&\quad + R_\Delta(t) 
\subeqn 
\label{e:R_DDelta} 
\\
&\quad +  R_{B\zeta,\Delta}(t - 1) + R_{B\zeta,\Delta}^\transpose( -t-1)
\subeqn 
\label{e:R_DBzDel}
\\
& \quad +  R_{V\zeta^2,\Delta}(t - 1) + R_{V\zeta^2,\Delta}^\transpose( -t-1) 
\subeqn 
\label{e:R_DV}
\\ 
& \quad+ O(\epsy^3)  
\nonumber
\end{align}
in which $B^\transpose  = \pi_0 \clE$ in   \eqref{e:R_Dz}, 
and each component shown on the right hand side of \eqref{e:R_DBz}--\eqref{e:R_DV}  is given below:

\bigskip
\noindent
 \text{\rm\bf (a)} \ 
The auto-correlation $R_{B\zeta}(t) =\Expect[B_t\zeta_tB_0^\transpose \zeta_0]$ in \eqref{e:R_DBz}
 admits the approximation,
 \begin{equation}
R_{B\zeta}(t) =  (P_0^t \clE)^\transpose \Pi_0 \clE R_\zeta(t) + O(\epsy^3), \quad t \ge0
\label{e:R_bz}
\end{equation}
where  $\Pi_0 = \diag (\pi_0)$.

\vspace{.5cm}
\noindent
 \text{\rm\bf (b)} \   The covariance for the martingale-difference sequence $\bfDelta$ is given by
$R_\Delta(t) =0$ for $t\neq 0$, and
\begin{equation}
\begin{aligned}
R_\Delta(0) = 
\Sigma^\Delta   = \Pi_\epsy & - P_0^\transpose \Pi_\epsy P_0 
\\
   &   - \bigl[ P_0^\transpose \diag (R_{\piload,\zeta}(0)) \clE
 		+  \clE^\transpose \diag (R_{\piload,\zeta}(0)) P_0 \bigr]
\\
  &    - \half  R_\zeta(0) \bigl[  P_0^\transpose \Pi_0 \clW
  	+ 2  {\clE}^\transpose  \Pi_0   \clE  +  {\clW}^\transpose \Pi_0 P_0  \bigr]
\end{aligned}
\label{e:R_Delta}
\end{equation}
where  $\Pi_\epsy = \diag (\pi_\epsy)$,  with $\pi_\epsy=\Expect[\piload_t]$,  and
\begin{equation}
R_{\piload,\zeta} (t) =\Expect[\piload(t)^\transpose \zeta(0)],\quad t\in\ZZ\, .
\label{e:Rpiloadzeta}
\end{equation}
When $\epsy=0$,  eq.~\eqref{e:R_Delta}
 becomes
\begin{equation}
 \Sigma^{\Delta^\bullet} = \Pi_0 - P_0^\transpose \Pi_0 P_0
\label{e:SigmaDeltaBullet}
\end{equation}

\vspace{.5cm}
\noindent
 \text{\rm\bf (c)} \   The cross-covariance $ R_{B\zeta, \Delta}(t) =  \Expect[B_t\zeta_t\Delta_0]$ admits the approximation,
\spm{\rd{To be carefully checked:}}
\begin{equation} 
R_{B\zeta, \Delta}(t) =
\begin{cases}
   0 &  t < 0 
   \\[.2cm] 
   \clE^\transpose R_{\Delta^2,\zeta}(0) \quad + O(\epsy^3)
     												&  t=0 
  \\[.2cm]
   \clE^\transpose A^tR_{\Delta^2,\zeta}(-t)
   + 
   \clE^\transpose \sum_{i=0} ^{t-1} A^{t-1-i} \clE^\transpose A^i R_\zeta (t-i) \Sigma^{\Delta ^\bullet} \quad+ O(\epsy^3) 	\qquad					& t \ge 1
 \end{cases} 
\label{e:R_bzD}
\end{equation}
where $A = P_0^\transpose$ and
\begin{equation}
R_{\Delta^2,\zeta}(t)  = \Expect[\Delta_t^\transpose \Delta_t \zeta_0]
\label{e:RD2zDef}
\end{equation}

\vspace{.5cm}
\noindent
 \text{\rm\bf (d)} \   The cross-covariance $ R_{V\zeta^2, \Delta}(t)=  \Expect[V_t \zeta_t^2\Delta_0]$ admits the approximation,
\notes{
\rd{Yue, please check my definitions for the $R_*$}.
\\
Note that I have removed $\epsy^2$ here to reduce clutter.  For example,
$\sigma_{\zeta}^2 $
insteady of
$\epsy^2\sigma_{\zeta^1}^2 $}
\begin{equation}
\label{e:R_VzD}
R_{V\zeta^2, \Delta}(t)=
 \begin{cases}
     0 & t<0 \\
    \frac{1}{2} \sigma_{\zeta}^2    (P_0^t \clW) ^\transpose \Sigma^{\Delta^\bullet}  +O(\epsy^3) & t\ge 0
 \end{cases}
\end{equation}
\qed 
\end{theorem}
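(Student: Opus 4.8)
The plan is to substitute the representation $D_{t+1} = B_t^\transpose \zeta_t + V_t^\transpose \zeta_t^2 + \Delta_{t+1}$ of \Proposition{t:piload_linear} into the definitions $R_{D,\zeta}(t)=\Expect[D(t)^\transpose\zeta(0)]$ and $R_D(t)=\Expect[D(t)^\transpose D(0)]$, expand, and classify each resulting term by its order in $\epsy$. Two bookkeeping facts drive everything. First, each factor $\zeta_s=\epsy\zeta_s^1$ contributes one power of $\epsy$, and by the coupling bounds \eqref{e:BVDeltaBdds} one may replace $B_s,V_s,\Delta_s$ by the $\bfzeta$-independent quantities $B_s^\bullet,V_s^\bullet,\Delta_s^\bullet$ at a cost of $\tilO(\epsy)$; hence any product carrying total $\zeta$-degree three or more is $O(\epsy^3)$ and is absorbed into the error. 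Second, with the filtration $\clF_t=\sigma(\{\zeta_s:s\in\ZZ\},\{N_s:s\le t\})$, equation \eqref{e:ExpectG=P} together with independence of $N_{t+1}$ gives $\Expect[\Delta_{t+1}\mid\clF_t]=0$, so $\bfDelta$ is a martingale-difference sequence for $\clF_t$. For $R_{D,\zeta}$ the $V_t^\transpose\zeta_t^2$ contribution is $O(\epsy^3)$, the $\Delta$-contribution vanishes by \eqref{e:RDz0}, and replacing $B_{t-1}$ by $B_{t-1}^\bullet$ (independent of $\bfzeta$, with $\Expect[B_{t-1}^\bullet]=(\pi_0\clE)^\transpose=B$) yields \eqref{e:R_Dz}. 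Expanding $D(t)^\transpose D(0)$ into nine products and applying the same order-counting shows that the three products pairing a $V\zeta^2$ against a $B\zeta$ or $V\zeta^2$ are $O(\epsy^3)$, while the remaining six, after shifting indices by one via stationarity, are exactly \eqref{e:R_DBz}, \eqref{e:R_DDelta}, the two terms of \eqref{e:R_DBzDel}, and the two terms of \eqref{e:R_DV}.

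For part (a), since $\zeta_t\zeta_0=O(\epsy^2)$ I replace both indicator vectors by their $\bullet$-versions, obtaining $R_{B\zeta}(t)=\clE^\transpose\Expect[(\piload_t^\bullet)^\transpose\piload_0^\bullet]\clE\,R_\zeta(t)+O(\epsy^3)$; the joint law of the stationary chain gives $\Expect[(\piload_t^\bullet)^\transpose\piload_0^\bullet]=(P_0^t)^\transpose\Pi_0$, which is \eqref{e:R_bz}. For part (b) I start from the exact identity \eqref{e:smoothedSigmaDelta}, substitute $P_{\zeta_t}=P_0+\clE\zeta_t+\half\clW\zeta_t^2+O(\epsy^3)$, and take expectations. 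The term $\Expect[\diag(\piload_{t+1})]=\Pi_\epsy$ and the linear-in-$\zeta_t$ cross terms $\Expect[\zeta_t\diag(\piload_t)]=\diag(R_{\piload,\zeta}(0))$ are kept exact (the latter is genuinely $O(\epsy^2)$ and cannot be replaced by its $\bullet$-value, which vanishes), whereas in the quadratic terms the factor $\zeta_t^2=O(\epsy^2)$ lets me replace $\diag(\piload_t)$ by $\Pi_0$ up to $O(\epsy^3)$, producing the bracket with coefficient $\half R_\zeta(0)$ in \eqref{e:R_Delta}; setting $\epsy=0$ kills the last two lines and leaves \eqref{e:SigmaDeltaBullet}.

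Parts (c) and (d) are the heart of the argument, and both rest on one auxiliary identity for the uncontrolled chain: iterating $\piload_k^\bullet=\piload_{k-1}^\bullet P_0+\Delta_k^\bullet$ and using $\Expect[\Delta_j^{\bullet\transpose}\Delta_0^\bullet]=0$ for $j\ge1$ together with $\Expect[(\piload_0^\bullet)^\transpose\Delta_0^\bullet]=\Sigma^{\Delta^\bullet}$ gives $\Expect[(\piload_k^\bullet)^\transpose\Delta_0^\bullet]=A^k\Sigma^{\Delta^\bullet}$ for $k\ge0$ (and $0$ for $k<0$), with $A=P_0^\transpose$. For (d), writing $V_t=\half\clW^\transpose\piload_t^\transpose$ and using $\zeta_t^2=O(\epsy^2)$ to decouple the indicator and martingale factors from $\bfzeta$ reduces $R_{V\zeta^2,\Delta}(t)$ to $\half R_\zeta(0)\clW^\transpose\Expect[(\piload_t^\bullet)^\transpose\Delta_0^\bullet]$; the vanishing for $t<0$ is immediate from the martingale-difference property (conditioning on $\clF_{-1}$, under which $V_t\zeta_t^2$ is measurable), and the identity delivers \eqref{e:R_VzD}. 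For (c) I iterate the linearized recursion $\piload_t^\transpose=A\piload_{t-1}^\transpose+D_t^\transpose+O(\epsy^3)$ down to time $0$, so that $m_t:=\Expect[\piload_t^\transpose\zeta_t\Delta_0]=A^t\Expect[\piload_0^\transpose\zeta_t\Delta_0]+\sum_{s=1}^t A^{t-s}\Expect[D_s^\transpose\zeta_t\Delta_0]+O(\epsy^3)$. The martingale-difference property kills $\Expect[\Delta_s^\transpose\zeta_t\Delta_0]$ for $1\le s\le t$ and, after writing $\piload_0^\transpose=P_{\zeta_{-1}}^\transpose\piload_{-1}^\transpose+\Delta_0^\transpose$, the $P_{\zeta_{-1}}^\transpose\piload_{-1}^\transpose$ part, leaving the leading term $A^t\Expect[\Delta_0^\transpose\Delta_0\zeta_t]=A^tR_{\Delta^2,\zeta}(-t)$; the $V$-contributions are $O(\epsy^3)$; and the $B_{s-1}\zeta_{s-1}$ contributions, decoupled via coupling to $\Expect[B_{s-1}^\bullet\Delta_0^\bullet]\,R_\zeta(t-s+1)=\clE^\transpose A^{s-1}\Sigma^{\Delta^\bullet}R_\zeta(t-s+1)$, reindex by $i=s-1$ to the stated sum. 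Multiplying by $\clE^\transpose$ gives \eqref{e:R_bzD} for $t\ge1$; the cases $t=0$ and $t<0$ follow from the same conditioning on $\clF_{-1}$.

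The main obstacle is precisely part (c) for $t\ge1$ (flagged in the statement), where three subtleties must be handled at once: the one-step index offset between the $B,V,\zeta$ factors and the $\Delta$ factors, which must be tracked through stationarity to land on $R_{\Delta^2,\zeta}(-t)$ and $R_\zeta(t-i)$ with the correct arguments; the recognition that the cross-terms $B_{s-1}\zeta_{s-1}$ are $O(\epsy^2)$ rather than negligible, so that the auxiliary identity $\Expect[(\piload_k^\bullet)^\transpose\Delta_0^\bullet]=A^k\Sigma^{\Delta^\bullet}$ is indispensable; and the repeated, careful use of the martingale-difference property relative to the correct $\clF_t$ to annihilate all same-time and future $\Delta$-products. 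A secondary point is that iterating the $O(\epsy^3)$ remainder accrues a factor of order $t$, which is harmless for the present per-$t$ statement but is what later forces the geometric-decay (mixing) hypotheses when these estimates are summed to form the power spectral density.
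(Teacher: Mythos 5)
Your proposal is correct and follows essentially the same route as the paper: the same six-term decomposition of $R_D$, the same coupling/order-counting for \eqref{e:R_Dz} and parts (a), (b), (d), and the same conditioning on $\clF_{-1}$ to produce $\clX=\Expect[\Delta_0^\transpose\Delta_0\mid\clF_{-1}]$ in part (c). The only organizational difference is in part (c) for $t\ge 1$: the paper first conditions out the $G_i$'s to reduce everything to $\clE^\transpose\Expect[\zeta_t\,P_{\zeta_{t-1}}^\transpose\cdots P_{\zeta_0}^\transpose\,\clX]$ and then expands the matrix product via a dedicated lemma ($\Omega_{t}=A^{t+1}+\sum_i A^{t-i}\clE^\transpose A^i\zeta_i+O(\epsy^2)$), whereas you iterate the linearized state recursion and dispose of the $\Delta_s$ and $B\zeta$ cross-terms one at a time using the identity $\Expect[(\piload_k^\bullet)^\transpose\Delta_0^\bullet]=A^k\Sigma^{\Delta^\bullet}$; these are the same computation packaged differently. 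One point where the paper is sharper than your write-up: the accumulated error over the $t$ iterates is controlled \emph{uniformly} in $t$ (not just per fixed $t$) by the identity $\clE\One=\Zero$ together with geometric ergodicity, which gives $\|A^{n}\clE^\transpose\|\le\kappa\varrho^{n}$ and hence a convergent geometric series of error terms — you flag the $O(t\epsy^3)$ accrual and defer it, but this decay is already available here and is what the paper uses, rather than the extra mixing hypothesis, to keep the remainder $O(\epsy^3)$ uniformly.
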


The derivation of \Theorem{t:R_D} is given in \Appendix{t:proof_R_D}.

\Theorem{t:R_D} leaves out an approximation for $\pi_\epsy$ that is required in \eqref{e:R_Delta}.  
It also leaves out an approximation for $ R_{\piload,\zeta}(0)$ required in \eqref{e:R_Delta},
 and approximations for $\{R_{\Delta^2,\zeta}(t) : t\le 0\}$ in \eqref{e:R_bzD}. 
 These are obtained in the following:

\begin{proposition}
\label{t:pi_epsy+R_Gz} 
The following hold under Assumptions  \textbf{A1} and \textbf{A2}:
\begin{romannum}
\item  The steady state mean admits the approximation,
\begin{equation}
\pi_\epsy = \pi_0 +     \xi U_1  + O(\epsy^3) 
\label{e:pi-epsy}
\end{equation}
where  $U_1$ is the fundamental matrix \eqref{e:funMatrix}, and
\begin{equation}
\xi = \bigl( R_{\piload,\zeta}(0) \bigr)^\transpose \clE  +\half \sigma_{\zeta}^2 \pi_0 \clW .
\label{e:xiUnderA1A2}
\end{equation}

\item For $t\ge 0$ we have,
\begin{equation}
\begin{aligned}
R_{\Delta^2,\zeta}(-t) 
&= 
\diag  (R_{\piload,\zeta}(-t-1) P_0) 
\\
&\qquad - P_0^\transpose \diag (R_{\piload,\zeta}(-t-1))P_0 
+ R_\zeta(t+1) \Expect [ \clX_\bullet^{(1)} ] +O(\epsy^3)
\end{aligned}
\label{e:RD2z}
\end{equation} 
where
$\Expect [ \clX_\bullet^{(1)} ] \eqdef  \diag (\pi_0 \clE) -\left(P_0^\transpose \Pi_0 \clE  + [P_0^\transpose\Pi_0 \clE]^\transpose \right)$.

\item The correlation $R_{\piload,\zeta}$ is approximated as the infinite sum,
\begin{equation}
 R_{\piload,\zeta}(t) 
= \epsy^2  \sum_{i=1}^\infty  \bigl( B^\transpose P_0^{i-1}\bigr)^\transpose R_{\zeta^1}(t-i) + O(\epsy^3),\qquad t\in\ZZ\,,
\label{e:RGzGen}
\end{equation}
in which $\| B^\transpose P_0^{i-1} \| \to 0$ geometrically fast as $i\to\infty$.
\end{romannum}
\qed
\end{proposition}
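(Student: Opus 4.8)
The three parts run on a common engine: the LTI recursion \eqref{e:piD}, the conditioning identity \eqref{e:ExpectG=P}, the uncorrelatedness $R_{\Delta,\zeta}\equiv 0$ of \eqref{e:RDz0}, and above all the coupling of \Proposition{t:piload-zeta}. The one fact I would isolate at the outset is a \emph{decoupling estimate}: if $W$ is any functional of $\bfzeta$ with $\Expect[|W|]=O(\epsy^2)$, then for each fixed $t$, $\Expect[\piload_t\,W]=\pi_0\,\Expect[W]+O(\epsy^3)$. This holds because $\piload_t=\piload_t^\bullet+\tilO(\epsy)$ with $\piload_t^\bullet$ independent of $\bfzeta$ by \Proposition{t:piload-zeta}, so replacing $\piload_t$ by $\piload_t^\bullet$ costs $\Expect[\|\piload_t-\piload_t^\bullet\|]\cdot O(\epsy^2)=O(\epsy^3)$, after which independence factorizes the expectation. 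Every ``$O(\epsy^2)$ times $\piload$'' expectation below is disposed of this way.

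For (i) I take expectations in the stationary recursion \eqref{e:piD}; the $O(\epsy^3)$ remainder is a random term of that norm, so its mean is $O(\epsy^3)$. Since $\Delta_{t+1}$ is a martingale difference, $\Expect[\Delta_{t+1}]=0$; from \eqref{e:BV}, $\Expect[B_t^\transpose\zeta_t]=\Expect[\piload_t\zeta_t]\clE=(R_{\piload,\zeta}(0))^\transpose\clE$, and the decoupling estimate gives $\Expect[V_t^\transpose\zeta_t^2]=\half\Expect[\piload_t\zeta_t^2]\clW=\half\sigma_\zeta^2\,\pi_0\clW+O(\epsy^3)$. Hence $\pi_\epsy=\pi_\epsy P_0+\xi+O(\epsy^3)$, i.e.\ $\pi_\epsy(I-P_0)=\xi+O(\epsy^3)$ with $\xi$ as in \eqref{e:xiUnderA1A2}. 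As $\pi_\epsy$ and $\pi_0$ are both probability vectors, $\delta\eqdef\pi_\epsy-\pi_0$ obeys $\delta\One=0$ and, using $\pi_0(I-P_0)=0$, $\delta(I-P_0)=\xi+O(\epsy^3)$. Because $\delta(\One\otimes\pi_0)=(\delta\One)\pi_0=0$, I may add this rank-one term and invert: $\delta(I-P_0+\One\otimes\pi_0)=\xi+O(\epsy^3)$, so right-multiplying by the bounded matrix $U_1$ of \eqref{e:funMatrix} yields $\delta=\xi U_1+O(\epsy^3)$, which is \eqref{e:pi-epsy}.

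For (iii) I transpose \eqref{e:piD}, multiply by $\zeta_0$, and take the stationary expectation. Using $R_{\Delta,\zeta}\equiv0$, the decoupling estimate on the $B_t$-term, and the extra factor $\zeta_t^2\zeta_0=O(\epsy^3)$ in the $V_t$-term, I obtain the scalar-driven recursion $R_{\piload,\zeta}(t+1)=P_0^\transpose R_{\piload,\zeta}(t)+B\,R_\zeta(t)+O(\epsy^3)$ with $B=\clE^\transpose\pi_0^\transpose$. Both $R_{\piload,\zeta}(t)$ and $B$ are annihilated by $\One^\transpose$: the former since $\sum_i\piload_t(i)=1$ forces $\One^\transpose R_{\piload,\zeta}(t)=\Expect[\zeta_0]=0$, the latter since $\clE\One=0$ by differentiating $P_\zeta\One=\One$. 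Thus the recursion lives in the subspace on which $P_0^\transpose$ contracts geometrically, and unrolling it with $R_\zeta=\epsy^2R_{\zeta^1}$ gives $R_{\piload,\zeta}(t)=\epsy^2\sum_{i\ge1}(P_0^\transpose)^{i-1}B\,R_{\zeta^1}(t-i)+O(\epsy^3)$. The identity $(P_0^\transpose)^{i-1}B=(B^\transpose P_0^{i-1})^\transpose$ rewrites this as \eqref{e:RGzGen}, and the geometric decay $\|B^\transpose P_0^{i-1}\|=\|\pi_0\clE(P_0^{i-1}-\One\otimes\pi_0)\|\to0$ follows from \eqref{e:ergo}.

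Part (ii) is the main obstacle. Writing $s=-t$ and $m=s-1=-t-1$, I use $\Delta_s=\piload_s-\piload_m P_{\zeta_m}$ together with the indicator identity $\piload^\transpose\piload=\diag(\piload)$ to expand $\Delta_s^\transpose\Delta_s$ into four terms. Conditioning on $(\piload_m,\bfzeta)$ and invoking \eqref{e:ExpectG=P} in the form $\Expect[\piload_s\mid\piload_m,\bfzeta]=\piload_m P_{\zeta_m}$, the two cross terms and the quadratic term each reduce to $P_{\zeta_m}^\transpose\diag(\piload_m)P_{\zeta_m}$, so three of the four collapse into a single copy with a minus sign, leaving $\Expect[\Delta_s^\transpose\Delta_s\mid\piload_m,\bfzeta]=\diag(\piload_m P_{\zeta_m})-P_{\zeta_m}^\transpose\diag(\piload_m)P_{\zeta_m}$ (the conditional analogue of \eqref{e:smoothedSigmaDelta}). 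Multiplying by $\zeta_0$, expanding $P_{\zeta_m}=P_0+\clE\zeta_m+O(\epsy^2)$, and applying the decoupling estimate to every surviving $O(\epsy^2)$ piece — replacing $\piload_m$ by $\pi_0$ in the $\zeta_m\zeta_0$ terms and recognizing $\Expect[\diag(\piload_m)\zeta_0]=\diag(R_{\piload,\zeta}(-t-1))$ in the $P_0$-only pieces, with $R_\zeta(m)=R_\zeta(t+1)$ by evenness — produces exactly the three groups of \eqref{e:RD2z}: $\diag(R_{\piload,\zeta}(-t-1)P_0)$, $-P_0^\transpose\diag(R_{\piload,\zeta}(-t-1))P_0$, and $R_\zeta(t+1)[\diag(\pi_0\clE)-(P_0^\transpose\Pi_0\clE+[P_0^\transpose\Pi_0\clE]^\transpose)]$, the bracket being precisely $\Expect[\clX_\bullet^{(1)}]$. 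The delicate points are the threefold conditional collapse of the cross and quadratic terms and the matching of the $\zeta_m\zeta_0$ coefficients into the symmetric pair $P_0^\transpose\Pi_0\clE+[P_0^\transpose\Pi_0\clE]^\transpose$; all discarded contributions are $O(\epsy^3)$ by the coupling.
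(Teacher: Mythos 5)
Your proposal is correct and follows essentially the same route as the paper: part (i) is the paper's fixed-point argument with the rank-one correction $\One\otimes\pi_0$ and inversion by $U_1$, part (iii) is the paper's recursion for $\nu_k=\Expect[\piload_k\zeta_0]$ (written in transposed form) unrolled on the subspace annihilated by $\One^\transpose$, and part (ii) reproduces the paper's Lemma on $\Expect[\zeta_t\clX]$, with your four-term expansion of $\Delta_s^\transpose\Delta_s$ simply rederiving the conditional covariance formula \eqref{e:clX} that the paper quotes. Your ``decoupling estimate'' is exactly the coupling-plus-independence step the paper invokes at each occurrence of an $O(\epsy^2)$ factor multiplying $\piload_t$.
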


The proof of \eqref{e:pi-epsy}
is given in
\Appendix{s:pi-epsy}, 
\eqref{e:RD2z} is given in \Appendix{s:e:RD2z}, 
 and \eqref{e:RGzGen} is established
in \Appendix{s:RGzGen}.  The geometric bound on the limit $\| B^\transpose P_0^{i-1} \| \to 0$ follows from \Lemma{t:FP}, and the formula $B^\transpose  = \pi_0 \clE$.

\begin{figure}[htb]
   \Ebox{.5}{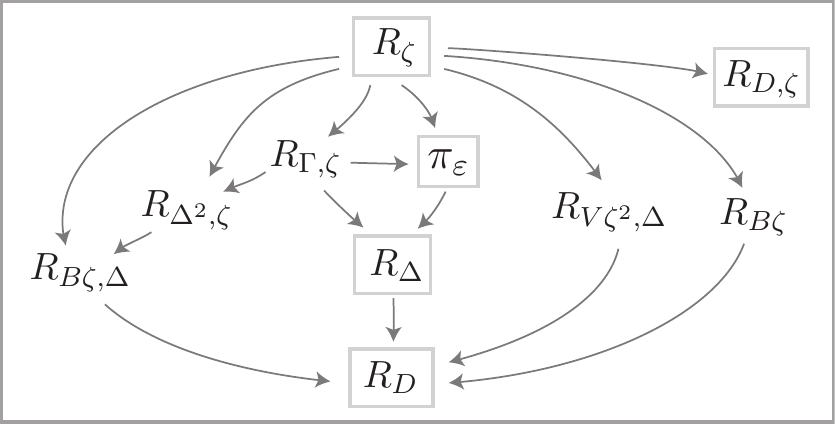}
   \caption{Dependency of autocorrelation functions involved in the approximations of $R(t)$ in \eqref{e:corr}.}
   \label{f:RiskSchematic}
 \end{figure}

The directed graph shown in \Fig{f:RiskSchematic} summarizes the dependency between all of these terms. For example,  the approximation of $R_{D,\zeta}$ only requires $R_\zeta$, and 
the covariance  $\Sigma^\Delta$ that defines $ R_\Delta$ is a function of $R_{\piload,\zeta}$ and $\pi_\epsy$.  The approximation of $R_D$ is a function of the four correlation functions shown  (as can also be seen from  \eqref{e:R_DBz}--\eqref{e:R_DV}).     The five boxed terms are those that are of interest to us directly;  the remaining five terms are introduced only to obtain a closed set of algebraic equations.

Closed-form expressions for the approximations in 
\Proposition{t:pi_epsy+R_Gz}  are possible under \textbf{A3}.   
The proof of
 \Proposition{t:R_Gz} is given  
in \Appendix{s:RGzGen}.

\begin{proposition}
\label{t:R_Gz}
Under  \textbf{A1}--\textbf{A3}, the row vector $\xi$ in \eqref{e:xiUnderA1A2} has the approximation, 
\begin{equation}
\xi =   \epsy^2  B^\transpose \sum_{k=1}^{n_z} a_k \rho_k[I-\rho_k P_0]^{-1}  \clE  +\half  \epsy^2 \sigma_{\zeta^1}^2 \pi_0 \clW + O(\epsy^3).
\label{e:xiUnderA3}
\end{equation} 
\qed
\end{proposition}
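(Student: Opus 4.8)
The plan is to start from the representation of $\xi$ in \eqref{e:xiUnderA1A2} and evaluate each of its two terms under the extra structure provided by \textbf{A3}. The second term requires nothing new: since $\zeta_t=\epsy\zeta^1_t$ we have $\sigma_\zeta^2=\epsy^2\sigma_{\zeta^1}^2$, so $\half\sigma_\zeta^2\pi_0\clW$ is already $\half\epsy^2\sigma_{\zeta^1}^2\pi_0\clW$, matching the second term of \eqref{e:xiUnderA3} exactly. All of the work therefore goes into the first term $\bigl(R_{\piload,\zeta}(0)\bigr)^\transpose\clE$, which I will show equals $\epsy^2 B^\transpose\sum_{k}a_k\rho_k[I-\rho_k P_0]^{-1}\clE+O(\epsy^3)$.

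First I would specialize \eqref{e:RGzGen} to $t=0$, giving $R_{\piload,\zeta}(0)=\epsy^2\sum_{i=1}^\infty\bigl(B^\transpose P_0^{i-1}\bigr)^\transpose R_{\zeta^1}(-i)+O(\epsy^3)$. Using that the auto-covariance is an even function, $R_{\zeta^1}(-i)=R_{\zeta^1}(i)$, and invoking \textbf{A3} through \eqref{e:R_zeta} (together with $R_\zeta=\epsy^2 R_{\zeta^1}$) to write $R_{\zeta^1}(i)=\sum_{k=1}^{n_z}a_k\rho_k^{\,i}$ for $i\ge1$. Transposing and substituting yields $\bigl(R_{\piload,\zeta}(0)\bigr)^\transpose=\epsy^2\sum_{i=1}^\infty B^\transpose P_0^{i-1}\sum_{k=1}^{n_z}a_k\rho_k^{\,i}+O(\epsy^3)$. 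Next I would interchange the order of the two summations and, for each fixed $k$, factor out $\rho_k$ to obtain $\sum_{i=1}^\infty\rho_k^{\,i}B^\transpose P_0^{i-1}=\rho_k B^\transpose\sum_{j=0}^\infty(\rho_k P_0)^j$. The Neumann series collapses to $\rho_k B^\transpose[I-\rho_k P_0]^{-1}$, so that $\bigl(R_{\piload,\zeta}(0)\bigr)^\transpose=\epsy^2\sum_{k=1}^{n_z}a_k\rho_k B^\transpose[I-\rho_k P_0]^{-1}+O(\epsy^3)$; right-multiplying by $\clE$ gives the first term of \eqref{e:xiUnderA3} and completes the identification.

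The two points requiring care---and the only real obstacles---are the interchange of the double sum and the validity of the Neumann-series identity. The interchange is justified by absolute convergence: $\|B^\transpose P_0^{i-1}\|\to0$ geometrically by \Lemma{t:FP} and the mean ergodic theorem \eqref{e:ergo}, while $|\rho_k|<1$ under \textbf{A3}, so the double series converges absolutely and Fubini applies. For the Neumann series, the key is that $P_0$ is a stochastic matrix, so its spectral radius equals $1$; hence the eigenvalues of $\rho_k P_0$ are $\{\rho_k\lambda\}$ with $|\rho_k\lambda|\le|\rho_k|<1$, giving spectral radius $|\rho_k|<1$. This both guarantees that $I-\rho_k P_0$ is invertible and that $\sum_{j\ge0}(\rho_k P_0)^j=[I-\rho_k P_0]^{-1}$, even when $\rho_k$ (and $a_k$) are complex---complex poles enter in conjugate pairs, keeping the resulting sum real.
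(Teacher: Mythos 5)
Your proposal is correct and follows essentially the same route as the paper: substitute the geometric form of $R_{\zeta}$ from \eqref{e:R_zeta} into \eqref{e:RGzGen} at $t=0$, interchange the sums, and collapse the resulting Neumann series to $[I-\rho_k P_0]^{-1}$ before right-multiplying by $\clE$ and adding the unchanged second term of \eqref{e:xiUnderA1A2}. Your explicit justification of the interchange (via the geometric decay of $\|B^\transpose P_0^{i-1}\|$ from \Lemma{t:FP}) and of the invertibility of $I-\rho_k P_0$ (via the spectral radius of the stochastic matrix $P_0$) is more careful than the paper's terse "rearranging terms," but it is the same argument.
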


\subsection{Power spectral density approximations}

\Theorem{t:R_D} provides a second-order approximation of the auto-covariance function $\{R(t)\}$ defined in \eqref{e:corr}, which we denote $\{\haR(t)\}$.   In particular, $\haR_D(t)$ is defined as the sum of  \eqref{e:R_DBz}--\eqref{e:R_DV}.  Based on this and \Proposition{t:piload_linear} we obtain a second-order approximation $\{\haRtot(t)\}$  of the auto-covariance function $\{\Rtot(t)\}$ for the triple 
$(\bfpiload,\bfmD, \bfzeta)$.

The power spectral density (PSD) of a stationary process is the Fourier transform of its \textit{auto-covariance}.  This matrix-valued function is denoted
\[
\psd(\theta) = \sum_{t=-\infty}^{\infty} \Sigmatot(t) e^{-j\theta t} ,  \quad \theta\in\Re
\]
in which  $\Sigmatot(t) \eqdef \Rtot(t) - \mu  \mu^\transpose $, $ t\ge 0$,
with $\mu^\transpose = \Expect[(\piload_t,D_t,\zeta_t)]$. 

To define an approximation for $\psd$ we 
must obtain an approximation $\{\haSigma(t)\}$ that is summable.    It turns out that this is obtained from   $\{\haR(t)\}$ without normalization.  For each $t$,  the $(2d+1)\times(2d+1)$ matrix is decomposd as follows:
\[
\haSigmatot(t)  =\begin{bmatrix}  \haSigma_\piload(t) & \haSigma_{\piload, (D,\zeta)} (t)
 \\[.2cm]
						\haSigma_{(D,\zeta), \piload} (t)  &  \haSigma(t)
\end{bmatrix}
\]
in which $ \haSigma(t)  = \haR(t) $, and the remaining terms are what would be obtained by ignoring the $O(\epsy^3)$ error term appearing in \eqref{e:piD},  and replacing $P_0$ by its deviation
$P_0 -\One \otimes \pi_0$.   Denote $\tilA = (P_0 -\One \otimes \pi_0)^\transpose$, and
\[ 
 \haSigma_\piload(t) = \sum_{i,j=0}^\infty \tilA^i \haR_D(t-i+j) (\tilA^j)^\transpose 
\]
The matrix $ \haSigma_{\piload, (D,\zeta)} (t)$ is the $(d+1)$-dimensional column vector whose first $d$ components are 
\[
  \haSigma_{\piload,D}(t)  = \sum_{i=0}^\infty \tilA^i \haR_D(t-i)  
\]
and the final component is defined by the right hand side of \eqref{e:RGzGen}, ignoring the approximation error. This can be equivalently expressed,
\[
  \haSigma_{\piload,\zeta}(t)  = \epsy^2  \sum_{i=1}^\infty   \tilA^{i-1} B R_{\zeta^1}(t-i) 
\]
where we have used the fact that $ \tilA^k B= (P_0^\transpose)^k B$ since $\One^\transpose B=\sum B_i=0$.
\spm{need another lemma?}
Finally,  $\haSigma_{(D,\zeta), \piload} (t) = \haSigma_{\piload, (D,\zeta)} (-t)^\transpose$.
 
 Denote
\[
\hapsd(\theta) = \sum_{t=-\infty}^{\infty} \haSigmatot(t) e^{-j\theta t} ,  \quad \theta\in\Re
\]
It can be shown that the sequence $\{\haRtot(t)\}$ is absolutely summable, so that the approximation $\hapsd$ is a continuous bounded function of $\theta$.  The following is an immediate corollary to  \Theorem{t:R_D}:

\begin{proposition}
\label{t:psdDecompA} 
The approximation of the  power spectral density of the stationary sequence $\{ D_{t+1} =\Delta_{t+1}+
B_t^\transpose \zeta_t +  V_t^\transpose \zeta_t^2    \}$ can be expressed as the sum, 
\begin{equation}
\begin{aligned}
\hapsd_D(\theta) =   \hapsd_\Delta(\theta) +  \hapsd_{B\zeta}(\theta)  & + e^{-j\theta} \hapsd_{B\zeta,\Delta}(\theta)  +   \bigl[ e^{-j\theta} \hapsd_{B\zeta,\Delta}(\theta) \bigr]^* 
\\
& 	+ e^{-j\theta} \hapsd_{V\zeta^2,\Delta}(\theta)  +   \bigl[ e^{-j\theta} \hapsd_{V\zeta^2,\Delta}(\theta) \bigr]^* 
\end{aligned}
\label{e:psd_D}
\end{equation}
in which each approximation on the right hand side is obtained from is obtained through the Fourier transform of the corresponding approximations   \eqref{e:R_bz}--\eqref{e:R_VzD} in \Theorem{t:R_D}, and 
where   ``${}^*$'' denotes complex-conjugate transpose. 

The power spectral density approximation for $\bfpiload$ is given by,   
\[
\hapsd_\piload(\theta) =  [I e^{-j\theta} -\tilA]^{-1}  \hapsd_D(\theta)  [I e^{j\theta} -\tilA^\transpose]^{-1} 
\]
and the cross-power spectral density approximations are
\[
\begin{aligned}
\hapsd_{\piload,D}(\theta)  &=  [I e^{-j\theta} -\tilA]^{-1}  \hapsd_D(\theta)   
\\
\hapsd_{\piload,\zeta}(\theta)  &=     \epsy^2 
\sum_{t=-\infty}^{\infty}   \sum_{i=1}^\infty   \tilA^{i-1} B R_{\zeta^1}(t-i)  e^{-j\theta t}  
\end{aligned}
\]
\qed 
\end{proposition}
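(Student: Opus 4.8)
The plan is to read off the power spectral densities as Fourier transforms of the auto-covariance approximations supplied by \Theorem{t:R_D}, since by definition $\hapsd(\theta)=\sum_t \haSigmatot(t)e^{-j\theta t}$, and for the mean-zero components $\bfDelta$, $B\zeta$ and $V\zeta^2$ the covariance and correlation coincide. Everything rests on two elementary properties of the discrete-time transform. First, the shift rule: if $\{R(s)\}$ is an absolutely summable matrix sequence with transform $\Phi(\theta)=\sum_s R(s)e^{-j\theta s}$, then $\sum_t R(t-1)e^{-j\theta t}=e^{-j\theta}\Phi(\theta)$. Second, the reflection/conjugation rule: because each $R(s)$ is real, the substitution $s=-t-1$ gives $\sum_t R^\transpose(-t-1)e^{-j\theta t}=e^{j\theta}\sum_s R(s)^\transpose e^{j\theta s}=[e^{-j\theta}\Phi(\theta)]^*$, where ${}^*$ is the conjugate transpose.

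With these two rules the decomposition \eqref{e:psd_D} is immediate, applying the transform term by term to the lines \eqref{e:R_DBz}--\eqref{e:R_DV} of \Theorem{t:R_D}. The diagonal terms $R_{B\zeta}$ and $R_\Delta$ transform without any shift into $\hapsd_{B\zeta}$ and $\hapsd_\Delta$, the latter being the constant $\Sigma^\Delta$ of \eqref{e:R_Delta} since $R_\Delta(t)=0$ for $t\neq 0$. Each shifted cross term $R_{B\zeta,\Delta}(t-1)$ and $R_{V\zeta^2,\Delta}(t-1)$ produces the factor $e^{-j\theta}$ multiplying $\hapsd_{B\zeta,\Delta}(\theta)$ and $\hapsd_{V\zeta^2,\Delta}(\theta)$ by the shift rule, while each reflected term $R_{B\zeta,\Delta}^\transpose(-t-1)$ and $R_{V\zeta^2,\Delta}^\transpose(-t-1)$ produces the conjugate transpose $[e^{-j\theta}\hapsd_{B\zeta,\Delta}(\theta)]^*$ and $[e^{-j\theta}\hapsd_{V\zeta^2,\Delta}(\theta)]^*$ by the reflection rule. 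Summing reproduces exactly \eqref{e:psd_D}.

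For the state spectral density I would use the LTI representation \eqref{e:piD} with $P_0$ replaced by its deviation, so that (dropping the $O(\epsy^3)$ remainder) $\piload_t^\transpose=\sum_{i\ge 0}\tilA^i D_{t-i}^\transpose$ is a convergent moving average. Convergence and the geometric decay $\|\tilA^i\|\to 0$ follow from the mean ergodic property \eqref{e:ergo}, since $\tilA^i=(P_0^i-\One\otimes\pi_0)^\transpose$. Summing the geometric series $\sum_{i\ge 0}\tilA^i e^{-ji\theta}$ yields the transfer function $H(\theta)$ displayed in the statement, and the standard input-output identity $\hapsd_\piload=H(\theta)\,\hapsd_D(\theta)\,H(\theta)^*$ gives the formula for $\hapsd_\piload$; truncating the right factor gives $\hapsd_{\piload,D}=H(\theta)\hapsd_D(\theta)$. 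The remaining cross spectrum $\hapsd_{\piload,\zeta}$ is obtained directly as the transform of $\haSigma_{\piload,\zeta}(t)=\epsy^2\sum_{i\ge 1}\tilA^{i-1}BR_{\zeta^1}(t-i)$, the closed form in \eqref{e:RGzGen}.

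The one point requiring care, rather than the routine algebra above, is justifying that all these Fourier series converge to continuous functions and that the transform may be applied term by term, including the interchange of the double sum over $(i,j)$ with the sum over $t$ in $\hapsd_\piload$. This reduces to absolute summability of $\{\haRtot(t)\}$, which I would establish from three facts already available: the geometric bound on $\|\tilA^i\|$ from \eqref{e:ergo}; the absolute summability of $R_{\zeta^1}$ assumed in \As{A2}(ii), which carries over to $R_\zeta$ and hence to $R_{B\zeta}$, $R_{\piload,\zeta}$ and the cross terms; and the finite support of $R_\Delta$ at $t=0$. Given absolute summability, Fubini justifies each interchange and guarantees that the transforms are continuous and bounded, completing the proof.
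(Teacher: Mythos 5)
Your proposal is correct and follows exactly the route the paper intends: the paper presents this proposition as an immediate corollary of Theorem~\ref{t:R_D}, obtained by Fourier-transforming the correlation approximations term by term (using the shift and conjugation rules for the $t-1$ and $-t-1$ arguments) and by summing the geometric series arising from the moving-average representation $\piload_t^\transpose=\sum_{i\ge0}\tilA^i D_{t-i}^\transpose$, with absolute summability of $\{\haRtot(t)\}$ justifying the interchanges. Your added care about summability and the Fubini step is consistent with the paper's remark that $\{\haRtot(t)\}$ is absolutely summable, so nothing is missing.
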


%
%
 \notes{Yue, you can keep proof in your thesis}

Under slightly stronger assumptions we obtain a uniform bound for this approximation.
The proof of \Proposition{t:psdDecomp} 
is given in \Section{s:psdDecomp}.

\begin{proposition}
\label{t:psdDecomp} 
Suppose that Assumptions~\textbf{A1} and \textbf{A2} hold, and in addition $R_\zeta(t)\to 0$ geometrically fast as $t\to\infty$.   Then, the uniform approximation holds:   For any $\varrho\in (0,1)$,
\[
\hapsd(\theta) =\psd(\theta)+O(\epsy^{2+\varrho}),  \quad \theta\in\Re
\] 
\qed
\end{proposition}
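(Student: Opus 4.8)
The plan is to bound the difference $\hapsd(\theta)-\psd(\theta)$ uniformly in $\theta$ by reducing it to an $\ell^1$ estimate on the two covariance sequences. Since $\hapsd$ and $\psd$ are the discrete-time Fourier transforms of $\{\haSigmatot(t)\}$ and $\{\Sigmatot(t)\}$ respectively, and $|e^{-j\theta t}|=1$, the triangle inequality gives
\[
\sup_\theta \| \hapsd(\theta)-\psd(\theta)\| \;\le\; \sum_{t=-\infty}^{\infty} \| \haSigmatot(t)-\Sigmatot(t)\| .
\]
The whole problem therefore reduces to showing this sum is $O(\epsy^{2+\varrho})$ for every $\varrho\in(0,1)$. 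It is worth flagging at the outset that I expect the true rate to be $O\!\big(\epsy^{3}\log(1/\epsy)\big)$, which is $o(\epsy^{2+\varrho})$ for each $\varrho<1$ but fails at $\varrho=1$; this is precisely why the statement restricts $\varrho$ to the open interval.

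First I would establish two complementary per-term estimates. The first is a \emph{uniform} bound $\|\haSigmatot(t)-\Sigmatot(t)\|\le C_1\epsy^3$ valid for all $t$. For the $(D,\zeta)$ block this is immediate from \Theorem{t:R_D} together with the closed forms in \Proposition{t:pi_epsy+R_Gz}, since there $\haSigma(t)=\haR(t)$ and every approximation carries an $O(\epsy^3)$ error whose constant is uniform in $t$ once one invokes the geometric decay of $P_0^t-\One\otimes\pi_0$ (cf.\ \eqref{e:ergo}) and of $R_\zeta(t)$. For the $\piload$ and cross blocks, $\haSigma_\piload(t)$, $\haSigma_{\piload,D}(t)$, and $\haSigma_{\piload,\zeta}(t)$ are built by summing against the powers $\tilA^i$ with $\tilA=(P_0-\One\otimes\pi_0)^\transpose$; because these are geometrically summable, $\sum_i\|\tilA^i\|<\infty$, the per-term $O(\epsy^3)$ errors propagate to a uniform $O(\epsy^3)$ bound on each block. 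One must additionally absorb the cumulative effect of the $O(\epsy^3)$ residual in the LTI representation \eqref{e:piD}: the same geometric summability shows that the steady-state covariance it induces is itself $O(\epsy^3)$ rather than accumulating over the infinite horizon.

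The second estimate is \emph{geometric decay}: there is $r\in(0,1)$ with $\|\haSigmatot(t)\|+\|\Sigmatot(t)\|\le C_2 r^{|t|}$. For $\haSigmatot$ this is read off the explicit construction, using the geometric decay of $R_\zeta$ (the added hypothesis) and of the $\tilA^i$. For the true covariance $\Sigmatot$ it follows from geometric ergodicity of $P_0$ combined with the coupling of \Proposition{t:piload-zeta} and the mixing of $\bfzeta$, which together control the joint auto-covariance of the non-Markovian triple $(\bfpiload,\bfmD,\bfzeta)$. Combining the two estimates yields $\|\haSigmatot(t)-\Sigmatot(t)\|\le \min\{C_1\epsy^3,\,C_2 r^{|t|}\}$, after which I would run a truncation (window) argument: split the sum at $|t|=T$, bound the head by $(2T+1)C_1\epsy^3$ and the tail by $2C_2 r^{T}/(1-r)$, and choose $T\asymp \log(1/\epsy)/\log(1/r)$ so that $r^{T}\asymp\epsy^{3}$. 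The total is $O\!\big(\epsy^{3}\log(1/\epsy)\big)$, and since $\epsy^{1-\varrho}\log(1/\epsy)\to0$ for $\varrho<1$ this is $O(\epsy^{2+\varrho})$, completing the proof.

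The main obstacle is the uniform $O(\epsy^3)$ per-term bound on the $\piload$-blocks and, relatedly, the genuine geometric decay of the true covariance $\Sigmatot(t)$. Two points require care: (i) the error in \eqref{e:piD} is a per-step residual, so one must check it does not compound into an $O(\epsy^3/(1-r))$-type loss that spoils the window balance — the geometrically summable resolvent $[Ie^{-j\theta}-\tilA]^{-1}$ is exactly what prevents this; and (ii) establishing geometric decay of $\Sigmatot(t)$ for the perturbed, non-Markov process, which is precisely where the strengthened hypothesis $R_\zeta(t)\to0$ geometrically (rather than the mere summability of \textbf{A2}) is indispensable — without it the tail sum decays too slowly for the window argument to reach any power $\epsy^{2+\varrho}$.
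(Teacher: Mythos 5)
Your proposal is correct and follows essentially the same route as the paper's proof: reduce to the $\ell^1$ bound $\sum_t\|\Sigmatot(t)-\haSigmatot(t)\|$, combine a uniform per-term $O(\epsy^3)$ estimate from \Theorem{t:R_D} with geometric decay of both covariance sequences, and truncate at $N(\epsy)\asymp\log(1/\epsy)$ to get $O(\epsy^3\log(1/\epsy))$. Your observation that the resulting rate is $\epsy^3\log(1/\epsy)$ (hence $O(\epsy^{2+\varrho})$ only for $\varrho<1$) agrees with the paper's own computation, which yields $\exp(b_1)\epsy^3 N(\epsy)$ plus a geometric tail.
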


%

\def\jumpP{\gamma}

\section{Example:  bits through queues}
\label{s:ex}
 
The following example is motivated by the communication model of \cite{anaver06}.   There is a sender that wishes to send data to a receiver.  Neither has access to a communication channel in the usual sense.   Instead,   the sender manipulates  the timing of packets to a queue,   and the receiver gathers data through observations of the timing of departures from the queue.  

\subsection{Timing channel model}

To obtain a finite state-space model it is assumed that the queue size is bounded by $\barq $,  and arrivals are rejected if they cause an overflow.  The dynamics of the queue are described as a reflected random walk,
\begin{equation}
Q_{t+1} = \min\{\barq, \max (0,  Q_t - S_{t+1} + A_{t+1}) \}  ,\qquad t\ge 0
\label{e:CRW}
\end{equation}
In the nominal model in which $\bfzeta\equiv 0$, the pair process $(\bfmS,\bfmA)$ is i.i.d.\ on $\nat^2$. The sender wishes to manipulate the arrival process $\bfmA$, and the receiver observes the departure process $\bfmS$.  This manipulation is modeled through a scalar input sequence $\bfzeta$.

For simplicity, for the nominal model we restrict to the  M/M/1 queue:  The usual model evolves in continuous time, but after sampling using \textit{uniformization} one obtains \eqref{e:CRW}, in which $\bfmA$ a Bernoulli sequence,  and $S_t=1-A_t$ for each $t$. For each integer $n\in\state=\{0,1,\dots,\barq\}$, denote $n^+ = \min(n+1,\barq) $ and $n^- = \max(n-1,0)$.  If $0<\lambda<\half$ is the probably of success for $\bfmA$, we then have,
\[
\Prob\{Q(t+1) = n^+ \mid Q(t) = n\} = 1 - \Prob\{Q(t+1) = n^- \mid Q(t) = n\} = \lambda
\]
Its steady-state pmf is given by
\[
\pi_0^Q(n) = \kappa \rho^n
\]
where $\rho=\lambda/(1-\lambda)$, and $\kappa>0$ is a normalizing factor.

Recall that the receiver observes departures from the queue, which is equivalent to observations of the sequence $\bfmS$.  To estimate joint statistics we expand the state space to $X(t) = (Q(t), S(t))$, which evolves on the state space $\state = \{0,1,\dots,\barq\}\times\{0,1\}$.   The nominal transition matrix is defined as follows,
\[
\begin{aligned}
P_0((n,s),(n^+,0)) &= \lambda   
\\
 P_0((n,s),(n^-,1)) &= 1-\lambda  
\end{aligned}
\]
The first identity holds because a transition from $(n,s)$ to $(n^+,0)$ means that  $A_{t+1}=1$, in which case $S_{t+1}=1-A_{t+1}=0$.   The justification for the second identity is symmetrical.  The transition matrix is sparse: $P_0(x,x')=0$ for all but at most two values of $x'$, regardless of $x$.

The sender wishes to the manipulate timing of arrivals,  which motivates the following formulation for the controlled transition matrix:
\[
\begin{aligned}
P_\zeta((n,s),(n^+,0)) &=\lambda(1+\zeta) 
  \\
  P_\zeta((n,s),(n^-,1)) &=1-\lambda(1+\zeta) 
\end{aligned}
\]
in which $\zeta$ is constrained to the interval $[-1,1]$.
The state process evolves as the nonlinear state space model \eqref{e:nonlinSS}, with
\[
\begin{aligned}
Q_{t+1} &= \min\{\barq, \max (0,  Q_t -  1 +2\ind\{N_{t+1} \le  \lambda(1+\zeta_t) ) \} 
\\
S_{t+1} &=   \ind\{N_{t+1} >  \lambda(1+\zeta_t) \}  ,\qquad t\ge 0
\end{aligned}
\]
in which $\bfmN$ is i.i.d., with marginal equal to the  uniform distribution on $[0,1]$.

\begin{figure}[h]
\Ebox{.6}{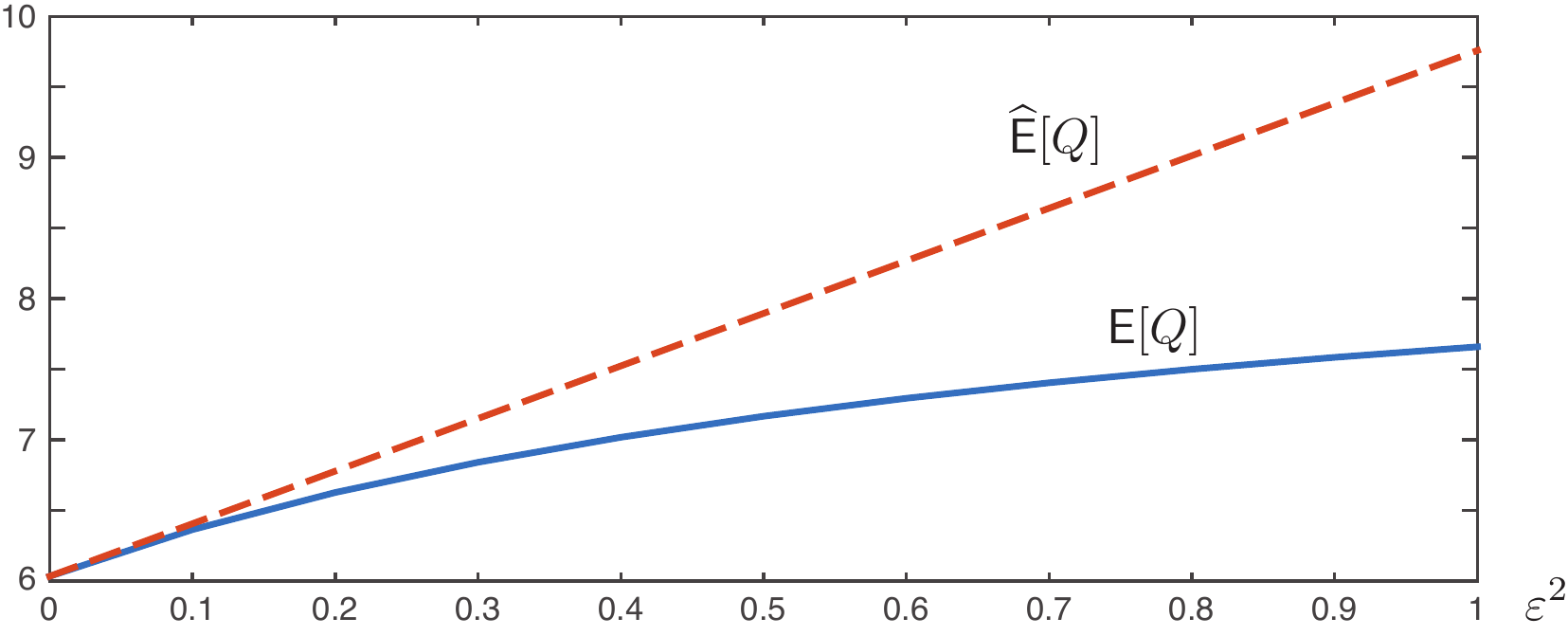} 
   \caption{Mean queue length grows approximately linearly in $\epsy^2$, for $\epsy^2\le 0.2$.  The error for $\epsy=1$ is approximately 25\%.}
   \label{f:meanQapprox}
\end{figure}

\Fig{f:meanQapprox} shows a comparison of the steady-state mean queue length as a function of $\epsy^2$ for a numerical example.  The linear approximation is obtained from the approximation of $\pi_\epsy$ given in  \Prop{t:pi_epsy+R_Gz}. Other statistics are shown in \Fig{f:PSDpiQapprox}.
Details can be found in \Section{s:Qnum}.

 \begin{figure}
\Ebox{1}{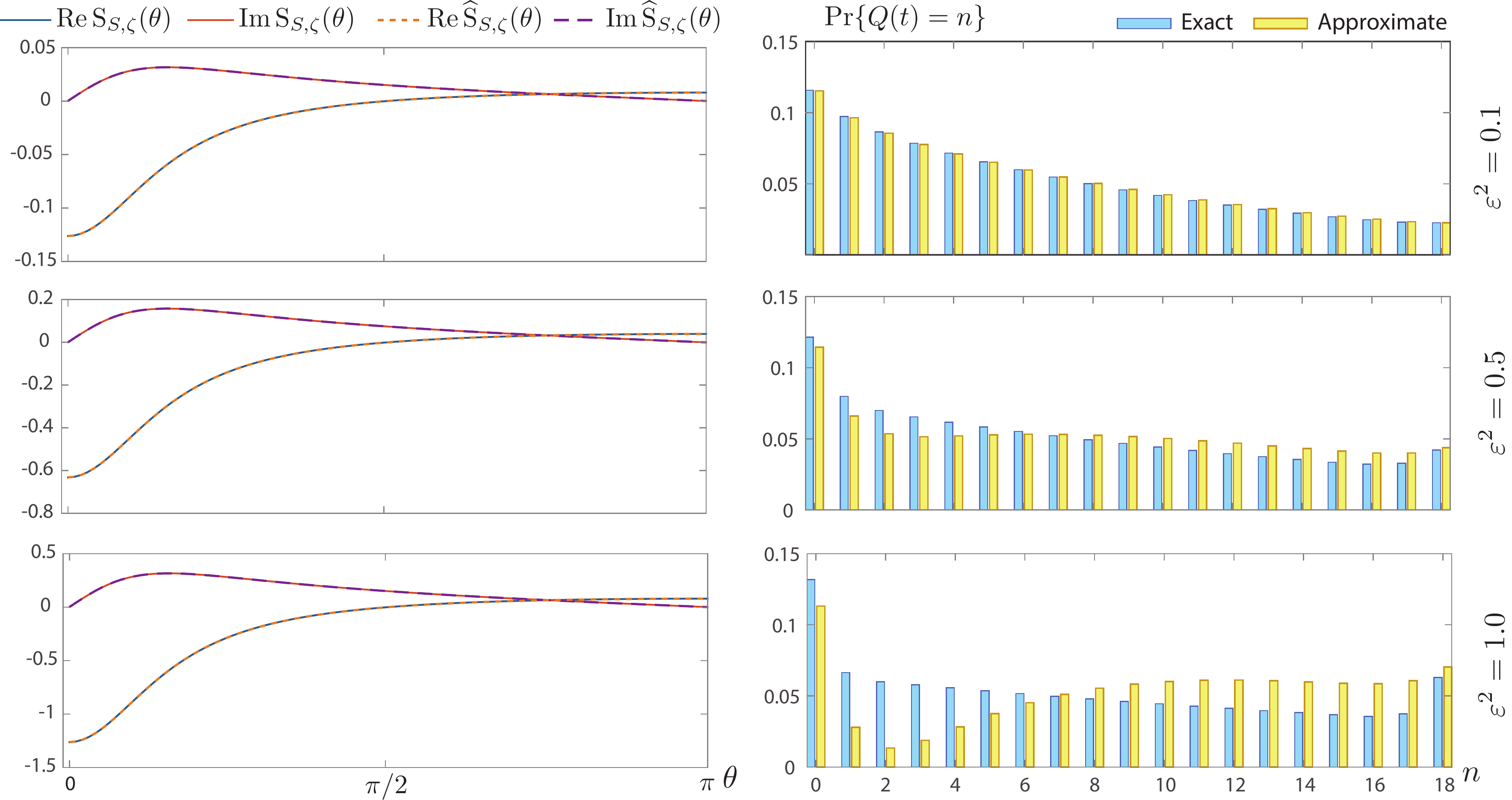}
   \caption{Experiments using $\gamma=0.4$.   The approximation for the cross power spectral density $\psd_{S,\zeta}$ appears to be exact for the entire range of $\epsy$. The approximation for the steady-state distribution of $\bfmQ$ is accurate for $\epsy^2\le 0.5$, but is very poor for $\epsy=1$.}
   \label{f:PSDpiQapprox}
 \end{figure}

\subsection{Second-order bound for mutual information}

The mutual information rate $I(\bfmS,\bfzeta)$ between $\bfmS$ and  $\bfzeta$ defines the capacity of this channel.  Letting $\chi^N$ denote the joint distribution of $\{ S_1,\dots, S_N,  \zeta_1,\dots,\zeta_N\}$,  and denoting the marginals   $\{ S_1,\dots, S_N\} \sim \chi^N_S$,  $\{  \zeta_1,\dots,\zeta_N\} \sim \chi^N_\zeta$, 
the mutual information rate is defined as the limit
\[
I(\bfmS,\bfzeta) = \lim_{N\to\infty}\frac{1}{N}   \KL(\chi^N \, \| \, \chi^N_S\times \chi^N_\zeta)
\]
where $\KL$ denotes relative entropy (i.e., K-L divergence) \cite{covtho91a}.   
In the following an approximation $\haKL$ is introduced,  and based on this an approximation to mutual information, 
\begin{equation}
\haI(\bfmS,\bfzeta) = \lim_{N\to\infty}\frac{1}{N}   \haKL(\chi^N \, \| \, \chi^N_S\times \chi^N_\zeta).
\label{e:haI}
\end{equation}

The approximation of relative entropy is given here in a general setting.  Let  $\probA$ and $\probB$ be probability measures on an abstract measurable space $(\clE,\clB)$.  For a measurable function $f\colon\clE\to\Re$ we let $\probA(f)$ denote the mean $\int f(x) \, \probA(dx)$.  The approximation is the non-negative functional defined as follows:
\begin{equation}
\haKL(\probA\|\probB) = \half \sup \Bigl\{  \frac{\probA(f)^2 } {\probB(f^2)} : \
					 \probB(f)=0 ,\  \probA(|f|)  <\infty,\quad \text{\it and}\quad  0<\probB(f^2)<\infty \Bigr\}
\label{e:haD}
\end{equation}
  
The proof of \Prop{t:haD}   
is contained in the Appendix.
\begin{proposition}
\label{t:haD}   
The following hold for any two probability measures   $\probA$ and $\probB$  on an abstract measurable space $(\clE,\clB)$.   Let $f^* = \log(d\probA/d\probB )$ denote  the log-likelihood ratio.
\begin{romannum}

\item The maximum in \eqref{e:haD}
 is given by  $\haf^* =  e^{f^*}-1 = d\probA/d\probB -1$,  whenever $\haKL(\probA\|\probB) $ is finite.
 
\item  There is a convex, increasing function $\kappa\colon\Re_+\to\Re_+$ that vanishes only at the origin, and such that the following bound holds for any two probability measures with bounded log-likelihood ratio:
\[
|\haKL(\probA\|\probB) - \KL(\probA\|\probB) |\le \kappa (\| f^* \|^3_\infty)
\]
where   $ \| f^* \|_\infty$ denotes the supremum norm.
\end{romannum}
\qed
\end{proposition}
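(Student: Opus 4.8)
The plan is to recognize $\haKL$ as (half of) the $\chi^2$-divergence and to read \eqref{e:haD} as a constrained Rayleigh-quotient problem in $L^2(\probB)$. Throughout I assume $\probA\ll\probB$, so that $f^*$ and the density $L\eqdef d\probA/d\probB=e^{f^*}$ are well defined, and I write $\probB(g)\eqdef\int g\,d\probB$. The two feasibility requirements $\probB(f)=0$ and $0<\probB(f^2)<\infty$ restrict $f$ to the subspace of $L^2(\probB)$ that is orthogonal to the constants.

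For part (i) I would rewrite the numerator first by change of measure and then by the constraint $\probB(f)=0$:
\[
\probA(f)=\probB(fL)=\probB\big(f(L-1)\big).
\]
The candidate maximizer $L-1$ is itself feasible, since $\probB(L)=1$ gives $\probB(L-1)=0$, and it is admissible precisely when it lies in $L^2(\probB)$, i.e.\ exactly when the supremum $\haKL(\probA\|\probB)$ is finite. Cauchy--Schwarz in $L^2(\probB)$ then yields
\[
\probA(f)^2=\probB\big(f(L-1)\big)^2\le \probB(f^2)\,\probB\big((L-1)^2\big),
\]
with equality iff $f\propto L-1$. Hence the supremum equals $\probB((L-1)^2)$ and is attained at $f=L-1=e^{f^*}-1=\haf^*$, which proves (i) and identifies $\haKL(\probA\|\probB)=\half\probB((L-1)^2)$.

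For part (ii) I would set $u\eqdef f^*$, so that $L=e^u$, $M\eqdef\|u\|_\infty<\infty$ by hypothesis, and the normalization becomes the single scalar constraint $\probB(e^u)=1$. Using this constraint to eliminate $\probB(e^u)$, the two functionals take the closed forms
\[
\KL(\probA\|\probB)=\probB(u e^u),\qquad \haKL(\probA\|\probB)=\half\big(\probB(e^{2u})-1\big),
\]
so that $\haKL-\KL=\probB(H(u))$ with $H(u)\eqdef\half(e^{2u}-1)-u e^u$. The decisive point is that $H(0)=H'(0)=H''(0)=0$ (indeed $H(u)=\tfrac16 u^3+O(u^4)$), so the first- and second-order terms cancel \emph{after} the constraint $\probB(e^u)=1$ has been invoked; expanding $\KL$ and $\haKL$ in isolation would only reveal agreement to second order. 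Taylor's theorem with remainder then gives a pointwise bound $|H(u)|\le C(M)\,|u|^3$ on $[-M,M]$, where $C(M)\eqdef\tfrac16\sup_{|\xi|\le M}|H'''(\xi)|$ is finite and increasing, and integrating against $\probB$ with $\probB(|u|^3)\le\|u\|_\infty^3=M^3$ (valid since $\probB$ is a probability measure) yields
\[
|\haKL(\probA\|\probB)-\KL(\probA\|\probB)|\le C(M)\,M^3 .
\]

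It remains to absorb the prefactor into a single function of $M^3$. Writing $s=M^3$, the bound reads $C(s^{1/3})\,s$, which is continuous, increasing, and—because $C(0)=\tfrac16$—grows \emph{linearly} in $s$ near the origin; this linear-near-zero behaviour is exactly what permits a convex majorant that still vanishes at $s=0$. A routine construction (a convex increasing majorant of $s\mapsto C(s^{1/3})s$, obtained e.g.\ by integrating a sufficiently large increasing density) then produces $\kappa\colon\Re_+\to\Re_+$ that is convex, increasing, vanishes only at the origin, and satisfies $\kappa(M^3)\ge C(M)M^3$, which is the asserted estimate. I expect the main obstacle to be part (ii): organising the expansion so that the constraint $\probB(e^u)=1$ produces genuine third-order cancellation of $H$, rather than the merely second-order agreement visible when $\KL$ and $\haKL$ are expanded separately. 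The passage to a bona fide convex $\kappa$ is comparatively routine once the linear-near-zero bound is in hand.
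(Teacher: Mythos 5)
Your proof is correct, but it takes a genuinely different route from the paper's. The paper derives \eqref{e:haD} from the Donsker--Varadhan representation $\KL(\probA\|\probB)=\sup_f\{\probA(f)-\LpB(f)\}$: it Taylor-expands the log-moment generating function $\LpB(f)=\probB(f)+\half\probB(f^2)+O(\|f\|_\infty^3)$, defines $\haKL$ as the resulting quadratic variational problem, recovers the Rayleigh-quotient form \eqref{e:haD} by optimizing over a scale parameter $\theta$, and obtains the error bound from the cubic remainder in that expansion; the identification of the maximizer $\haf^*=e^{f^*}-1$ is asserted rather than derived. You instead work directly with \eqref{e:haD}: Cauchy--Schwarz in $L^2(\probB)$ gives (i) immediately together with the closed form $\haKL(\probA\|\probB)=\half\probB((L-1)^2)$ (half the $\chi^2$-divergence), and then (ii) reduces to the pointwise bound $|H(u)|\le C(M)|u|^3$ for $H(u)=\half(e^{2u}-1)-ue^u$, whose third-order vanishing at $u=0$ you verify explicitly. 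Your version is arguably tighter as a proof: it sidesteps the paper's somewhat delicate step of bounding the difference of two suprema (which implicitly requires controlling the Taylor remainder at the respective maximizers $f^*$ and $\haf^*$), and it makes the equality case in (i) explicit. What the paper's route buys is conceptual: it exhibits $\haKL$ as the natural quadratic approximation of the convex-duality formula for relative entropy, which motivates the definition \eqref{e:haD} in the first place. Your correct observation that the cancellation to third order occurs only after imposing $\probB(e^u)=1$ is the same normalization the paper invokes when it restricts to $\probB(f)=0$; and your closing construction of the convex majorant $\kappa$ is routine as you say (e.g.\ $\kappa(s)=\int_0^{2s}C(v^{1/3})\,dv$ works), and is in fact more than the paper itself supplies.
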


Returning to the stochastic process setting, in the context of \eqref{e:haI}, we have for fixed $N$ the following correspondences:
\[
\probA = \chi^N,\qquad  \probB= \chi^N_S\times \chi^N_\zeta
\]
Consider for   $0\le n< N$ the function 
\[
f(S_1,\dots, S_N,  \zeta_1,\dots,\zeta_N) = \sum_{k=1}^{N-n}  \tilS_{k+n}\zeta_k
\]
in which $\tilS_t=S_t-\Expect[S_t]$. 
This has mean zero under $\probB$, and its mean under $\probA$ is,
\[
\probA(f) = 
\Expect_{\chi^N}[ f(S_1,\dots, S_N,  \zeta_1,\dots,\zeta_N)  ] = (N-n) \Sigma_{S,\zeta}(n)
\]
The second moment is also expressed in terms of autocorrelation functions:
\[
\begin{aligned}
\probB(f^2) &= 
\Expect_{\chi^N_S\times \chi^N_\zeta}[ f^2(S_1,\dots, S_N,  \zeta_1,\dots,\zeta_N)  ]  
\\
  &=\sum_{k=1}^{N-n} \sum_{\ell=1}^{N-n} \Expect_{\chi^N_S\times \chi^N_\zeta}
  		 \Expect [  \tilS_{k+n} \tilS_{\ell+n}\zeta_k \zeta_\ell]\\
  &=\sum_{k=1}^{N-n} \sum_{\ell=1}^{N-n}  \Sigma_S(k-\ell) \Sigma_\zeta(k-\ell) 
\end{aligned}
\]
For fixed $n$ and $N\gg n$ this admits the approximation
$
\probB(f^2) \approx (N-n) \psd_{S\times\zeta}(0)$,
where
\[
 \psd_{S\times\zeta}(0)=\sum_{m=-\infty}^\infty \Sigma_S(m)  \Sigma_\zeta(m) \, .
\]
This gives the large-$N$ approximation,
\[
\haKL(\probA\|\probB) \ge  \half    \frac{\probA(f)^2 } {\probB(f^2)} =   \half  \frac{\Sigma_{S,\zeta}(n)^2}{\psd_{S\times\zeta}(0)} N+O(1)
\]
While the derivation was performed for $n\ge 0$,  similar arguments establish the same bound
 for any integer $n$.
The approximation for mutual information rate is thus lower bounded,
\begin{equation}
\haI(\bfmS,\bfzeta) \ge  \half \sup_{-\infty<n<\infty} \frac{\Sigma_{S,\zeta}(n)^2}{\psd_{S\times\zeta}(0)}
\label{e:haI_secondOrder}
\end{equation}

This function class is of course highly restrictive.  A larger class of functions can be obtained by defining for each $n$ and each $\alpha,\beta \in\Re^{n+1}$,
 \[
f(S_1,\dots, S_N,  \zeta_1,\dots,\zeta_N) = \sum_{k=1}^{N-n}  S^\alpha_k\zeta_k^\beta,\qquad \textit{\it with}\quad S^\alpha_k= \sum_{m=0}^n \alpha_m \tilS_{k+m},
\
\
\zeta^\beta_k= \sum_{m=0}^n \beta_m \zeta_{k+m}.
\]
Formulae for $\probA(f)$ and $\probB(f^2)$ can be obtained as in the foregoing, yielding 
\[
\haI(\bfmS,\bfzeta) \ge \half   \frac{\Sigma_{S^\alpha,\zeta^\beta}(0)^2}{\psd_{S^\alpha\times\zeta^\beta}(0)}
\]

\subsection{Numerical experiments}
\label{s:Qnum}
  
%
%

In all of the numerical examples described here, $\lambda$ is chosen so that  $\rho=\lambda/(1-\lambda) = 0.9$, and  the upper bound appearing in \eqref{e:CRW}
 is  $\barq=18$.
 
A Markovian model was chosen for $\zeta^1$ so that exact computations can be obtained for the larger Markov chain.   A simple model was chosen, in which   $\bfzeta^1$ evolves on the three states $\{-1,0,1\}$.  The larger Markov chain $\Psi_t = (Q_t ,  S_t,\zeta^1_t)$ then evolves on a state space of size $6(1+\barq)$.

\begin{figure}
\Ebox{1}{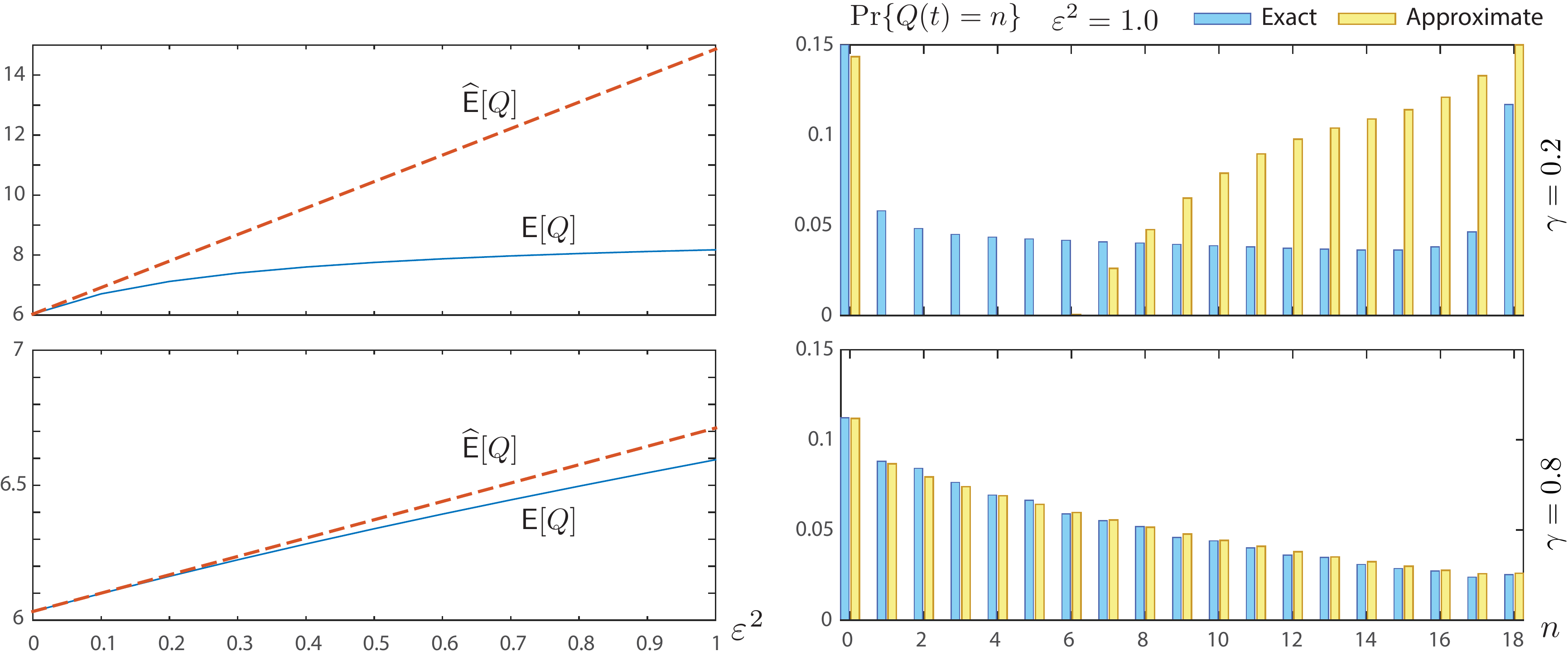}
   \caption{The top row shows results from numerical experiments with $\gamma=0.2$,  and the bottom row $\gamma=0.8$.  The approximation for $\pi^Q_\epsy(n)$ is nearly perfect in the latter case, even with $\epsy=1$.  With $\gamma=0.2$,  the second order approximation is poor for $\epsy\ge0.5$.}\label{f:gamma0208epsy1Q}
 \end{figure}

The three states are labeled $\{z^i: i=1,2,3\} = \{-1,0,1\}$.  For a fixed parameter  $\jumpP\in (0,\half)$, the transition matrix $K	$ is defined as follows.   First,  $\Prob\{\zeta_{t+1} = z^j \mid \zeta_t=z^i\} = \jumpP$ whenever $|z^j-z^i|=1$:
\[ 
K_{1,2} = K_{2,1} = K_{2,3} = K_{3,2}     = \jumpP 
\]
The remaining transition probabilities are $K_{1,1}=K_{3,3}=  1-\jumpP$,  and  $K_{2,2} = 1- 2\jumpP$.   The steady-state pmf $\mu_0$ is uniform, 
so the steady-state variance is
\[
\sigma^2_\zeta = ((-1)^2+0^2+1^2)/3 = 2/3
\]
Its autocorrelation is equal to its autocovariance:
 $R_\zeta(m) =\sigma^2_\zeta  (1-\jumpP)^{|m|}$.  
 
Unless explicitly stated otherwise, the results that follow use $\jumpP=0.4$,  so that the asymptotic variance  (the variance appearing in the Central Limit Theorem) is  
\[
\text{asym. variance} = \psd_\zeta(0) =
\sum_{k=-\infty}^\infty R_\zeta(k) = \Bigl( \frac{2}{\jumpP}  -  1\Bigr)\sigma^2_\zeta =4\sigma^2_\zeta
\]

Let $\hapi_\epsy = \pi_0 +   \xi U_1 $ (with $\xi$ and $U_1$ defined in  \eqref{t:pi_epsy+R_Gz}).   
The approximate pmf illustrated in the plots on the right hand side of \Fig{f:PSDpiQapprox} are defined by the first marginal,  $\hapi_\epsy^Q(n) =
\sum_{s=0,1}  \hapi_\epsy(n,s) $,  for $n=0,\dots, 18$.   The approximate steady-state queue length plotted in   \Fig{f:meanQapprox} is defined by
\[
\widehat{\Expect}[ Q] =   \sum_{n=0}^{18}   n\hapi_\epsy^Q(n)
\]
The steady state pmf for 
$\bfPsi$ was computed to obtain the exact steady-state mean $\Expect[Q]$,  which is the concave plot shown in  \Fig{f:meanQapprox}.  The approximations are accurate for $\epsy\le 0.7$.

   The approximations for the cross power spectral density shown on the left hand side of  \Fig{f:PSDpiQapprox} are remarkable.
   
   \spm{2016 deleted: -- it is not known why the approximation appears to be exact for all values of $\epsy$ and $\gamma$.
   }
    
The statistics of $\bfmQ$ and its approximations are highly sensitive to the parameter $\gamma$.   For $\gamma=0.8$,   the approximation of the steady-state mean $\Expect[Q]$ and approximations of $\pi^Q_\epsy(n)$ are nearly exact for the entire range of $\epsy$.  For $0<\gamma\le    0.2$ the approximations are accurate only for a very narrow range of $\epsy\sim 0$.    
  Results for $\gamma=0.2$ and $\gamma=0.8$ are shown in \Fig{f:gamma0208epsy1Q}.

  \begin{figure}
  \Ebox{.65}{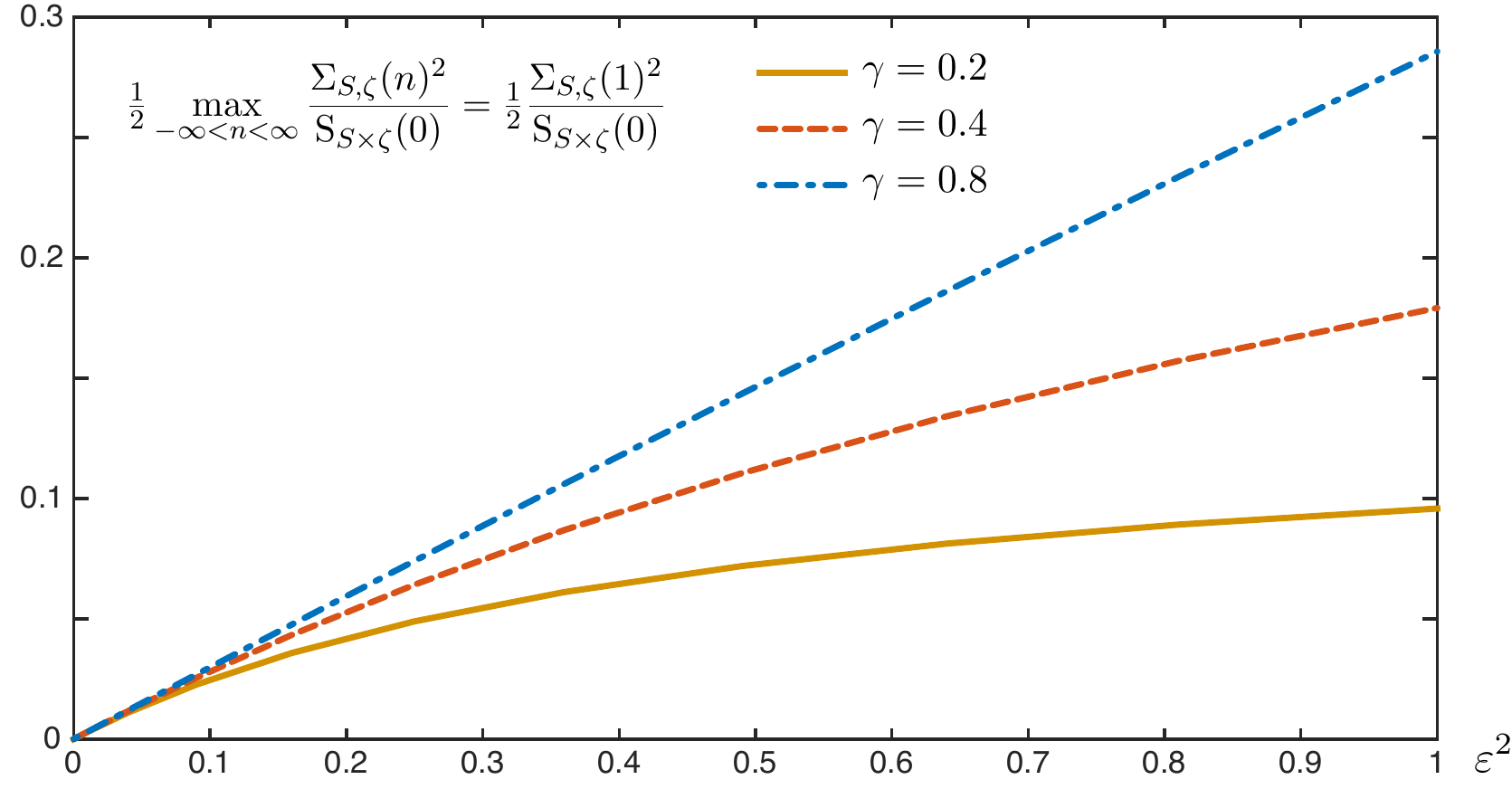} 
   \caption{Lower bound for $\haI(\bfmS,\bfzeta) $ as a function of $\epsy^2$ for three values of $\gamma$. }
   \label{f:mutualinfoApprox}
 \end{figure}

The approximation for mutual information in \eqref{e:haI_secondOrder} is defined as a maximum,   which was achieved at $n=1$ in each experiment:
\[  
\haI(\bfmS,\bfzeta) \ge 
\half   \max_{-\infty<n<\infty}
\frac{\Sigma_{S,\zeta}(n)^2}{\psd_{S\times\zeta}(0)}
=
\half    
\frac{\Sigma_{S,\zeta}(1)^2}{\psd_{S\times\zeta}(0)}
\]
Plots for four different values of $\gamma$ are shown in \Fig{f:mutualinfoApprox}.

The plots in  \Fig{f:mutualinfoApprox} use the approximations obtained in \Section{s:main}.  However,  plots obtained using the exact values of $\Sigma_{S,\zeta}(1) $  and $\psd_{S\times\zeta}(0)$ are indistinguishable. 

\spm{2016 ??
\\
The error in the approximation of $ \psd_{S\times\zeta}(0)$ obtained from \Prop{t:psdDecompA} is no more than 2\%\ in all of the experiments conducted.  
}

Unfortunately,  we cannot compare $\haI(\bfmS,\bfzeta) $ with the true mutual information rate.    This is a significant computational challenge that is beyond the scope of this paper.

\def\PPsi{P^{\tiny \Psi}}
\def\PO{\text{\tt P0}}
\def\Pder{\text{{\tt P\_der1}}}
%
%
%
%
%
%

\section{Conclusions}
\label{s:conc}

It is very surprising to obtain an exact second order Taylor series expansion for these second order statistics with minimal assumptions on the controlled Markov model.  The accuracy of the approximations obtained in numerical examples is also fortunate. The companion paper \cite{chebusmey15c} shows that these approximations are also accurate in applications to distributed control.
\spm{don't reference chebusmey14 here since this paper has a hack approximation -- not a second order Taylor series}
 Further work is needed to see if this will lead to useful bounds in applications to information theory.   
 \spm{2016  ok?}

\bibliographystyle{IEEEtran}
\bibliography{strings,markov,q}


\vspace{1cm}

\appendix

\centerline{\Large\bf Appendices}

\section{Coupling}
\label{s:couple}
 
We present here the proof of \Proposition{t:piload-zeta}.

We first obtain a recursion for the joint process $\bfPsi\eqdef (\bfpiload ,\bfpiload^\bullet)$, driven by $\bfzeta$, and two  i.i.d.\ sequences $\bfmN^\circ$ and $\bfmN^\bullet$,   each with marginal distribution uniform on $[0,1]$.  The three sequences $\bfzeta,\bfmN^\circ ,\bfmN^\bullet$ are mutually independent.

 Letting $\bfmW$ denote the 3-dimensional stationary stochastic process $(\bfzeta,\bfmN^\circ ,\bfmN^\bullet)$,  we   construct a function $F$ for which,
\begin{equation}
\Psi_{t+1} = F(\Psi_t, W_{t+1})
\label{e:piloadJoint}
\end{equation}
Since $\bfPsi$ evolves on a finite set,  the existence of a stationary solution follows from  
\cite{foskon04} (see Theorem~5 and the discussion that follows).

To construct the function $F$ it is enough to define the matrix sequence $\bfmG$ that appears in \eqref{e:piG}, and also the sequence $\bfmG^\bullet$ that defines the dynamics of $\bfpiload^\bullet$.  Each are based on the following definition:  for $\zeta\in\Re$ and $s\in[0,1]$, denote
\[
\clG_{i,j} (\zeta, s)  = \ind\Bigl\{ \sum_{k=1}^{j-1} P_\zeta(i,k) \le s < \sum_{k=1}^{j} P_\zeta(i,k) \Bigr\}
\]
with the convention that '``$\sum_{k=1}^0[\cdot]_k =0$''.  
We then take, for any $t$,
\[
G_t^\bullet = \clG(0, N^\bullet_t) 
\]
A third i.i.d.\ sequence $\bfmN$ is obtained by sampling:
\begin{equation}
N_{t+1}= 
\begin{cases}
N^\bullet_{t+1} & \text{if \ } \piload_t=\piload_t^\bullet 
\\
N^\circ_{t+1} & \text{else \ }   
\end{cases}
\label{e:Ncouple}
\end{equation}   
We then take $
G_t = \clG(\zeta_{t-1}, N_t)$. 

Based on these definitions, the evolution equation \eqref{e:piloadJoint} holds for some $F$; we now focus analysis on a stationary solution defined for all $t\in\ZZ$.

Choose $T_0\ge 1$,   $\delta_0>0$,      $\epsy_0>0$, so that
\[
\begin{aligned}
\Prob\{ X_{t+1} = X_{t+1}^\bullet \mid X_{t} = X_{t}^\bullet \} 
		& \ge 1- \delta_0 \epsy
\\
\Prob\{ X_{t+T_0} = X_{t+T_0}^\bullet \mid X_{t} \neq X_{t}^\bullet \}   
		&\ge \delta_0,\qquad\qquad t\in\ZZ,\ \epsy\le \epsy_0
\end{aligned}
\]
This is possible by the construction of the joint evolution equations, and the assumption that $\bfmX^\bullet$ is irreducible and aperiodic.  The first bound may be extended to obtain,
\[
\Prob\{ X_{t+T} = X_{t+T}^\bullet \mid X_{t} = X_{t}^\bullet \}     \ge (1- \delta_0 \epsy)^{T} \ge 1-T\delta_0 \epsy,\quad T\ge 1.
\]

We then have by stationarity,
\[
\begin{aligned}
\Prob\{ X_0 = X_0^\bullet  \} =
\Prob\{ X_{T_0} = X_{T_0}^\bullet  \} 
&=\Prob\{ X_{T_0} = X_{T_0}^\bullet \mid X_{0} = X_{0}^\bullet  \} 
\Prob\{   X_{0} = X_{0}^\bullet  \} 
\\
  &\ \ + \Prob\{ X_{T_0} = X_{T_0}^\bullet \mid X_{0} \neq X_{0}^\bullet  \} 
\Prob\{   X_{0} \neq X_{0}^\bullet  \} 
\end{aligned}
\]
Now, substitute the prior bounds,  
giving
\[
\begin{aligned}
\Prob\{ X_0 = X_0^\bullet  \}  
&\ge (1-T_0\delta_0 \epsy) 
\Prob\{   X_{0} = X_{0}^\bullet  \} 
\\
  &\ \ + \delta_0\Prob\{   X_{0} \neq X_{0}^\bullet  \} 
\end{aligned}
\]
Substituting $\Prob\{   X_{0} \neq X_{0}^\bullet  \}  = 1 - 
\Prob\{   X_{0} = X_{0}^\bullet  \} $ and
rearranging terms gives,
\[
\Prob\{ X_0 = X_0^\bullet  \}   \ge \frac{1}{1+T_0 \epsy}\ge 1-T_0  \epsy 
\]
which completes the proof of \eqref{e:piloadpiload}.

The approximation \eqref{e:piloadzeta} follows from \eqref{e:piloadpiload} and independence of $\bfzeta$ and $\bfpiload^\bullet$:
\[
\Expect[\piload_t  \zeta_t  ] 
= \Expect[\zeta_t  \piload_t^\bullet ] + \Expect[\zeta_t ( \piload_t - \piload_t^\bullet) ]
= \Expect[\zeta_t] \Expect[  \piload_t^\bullet ] +O(\epsy^2)  
=  O(\epsy^2) 
\]
The remaining bounds in \eqref{e:BVDeltaBdds}  follow directly from \eqref{e:piloadpiload} and the smoothnes assumptions on $P_\zeta$.
\qed

\section{Martingale difference sequence}
\label{t:proof_CovDecomp}

Here we give the  proof of \Proposition{t:CovDecomp}.
Define the $\sigma$-algebra $\clF_t = \sigma \{\zeta_{-\infty}^{\infty}, N_{-\infty}^t \}$.  The random vector $\Delta_t$ is $\clF_t $-measurable for each $t$, and it follows from \eqref{e:ExpectG=P} that 
\[
	\Expect [\Delta_{t+1}|\clF_t] = \Expect [\piload_t[\clG(\zeta_t, N_{t+1}) - P_{\zeta_t}] | \clF_t]  = 0
\]
This proves that ${\Delta_t}$ is a martingale difference sequence.  Moreover,
using the smoothing property of conditional expectation, for any $t$ and $\tau$,
\[
\begin{aligned}
	\Expect [\Delta_{t+1} \zeta_\tau] &=  \Expect [\piload_t[\clG(\zeta_t, N_{t+1}) - P_{\zeta_t}]\zeta_\tau] 
	\\
					       &=  \Expect [ \piload_t \zeta_\tau ~\Expect [ \{\clG(\zeta_t, N_{t+1}) - P_{\zeta_t} \} | \clF_t] ]=0.
\end{aligned}
\]
This establishes $R_{\Delta,\zeta}(k) = 0$  for any $k$, which is 
\eqref{e:RDz0}.

To obtain the covariance of $\bfDelta$, we first compute the second moment of $\piload_tG_{t+1}$.
We use the fact that $\piload_t$ and $G_{t+1}$ have entries zero or one in the following:
\[
\begin{aligned} 
\Expect[ (\piload_tG_{t+1})^\transpose \piload_tG_{t+1}]_{ij}  
&= 
\sum_{k=1}^d  \Expect[ \piload_t(x^k) G_{t+1}(k,i)  G_{t+1}(k,j) ]
\\
&= 
\sum_{k=1}^d  \Expect[ \piload_t(x^k) G_{t+1}(k,i)  ]\ind\{i=j\} 
\end{aligned}
\]
We also have $\Expect[ \piload_t G_{t+1}   ]  =\Expect[ \piload_t  P_{\zeta_t} ]  $.
These two identities imply the formula, 
\[
\Expect[ (\piload_tG_{t+1})^\transpose \piload_tG_{t+1}] = \diag(\Expect[\piload_t P_{\zeta_t}])
		= \diag(\Expect[\piload_{t+1}]).
\]
The result \eqref{e:smoothedSigmaDelta} follows from this second moment formula:
\[
\begin{aligned}
 \Cov ( \piload_t[G_{t+1}-P_{\zeta_t}] ) &= 
\Expect[ (\piload_tG_{t+1})^\transpose \piload_tG_{t+1}] 
-
\Expect[ (\piload_tP_{\zeta_t})^\transpose \piload_t P_{\zeta_t}]
\\
  & =\diag(\Expect[\piload_{t+1}])
  	- \Expect[ (P_{\zeta_t})^\transpose \diag(\piload_t )P_{\zeta_t}].
\end{aligned}
\]
\qed

\section{Approximating the steady-state mean}
\label{s:pi-epsy}

The approximation  \eqref{e:pi-epsy} is obtained here,  starting with the approximate evolution equation that was used to obtain  \eqref{e:piD}:
\[
\begin{aligned}
\piload_{t+1} & = \piload_t P_{\zeta_t} + \Delta_{t+1} \\
 &=\piload_t [ P_0 + \zeta_t\clE +\half \zeta_t^2 \clW] + \Delta_{t+1} + O(\epsy^3)
\end{aligned}
\]
Taking the mean of each side, and using stationarity,
\begin{equation}
\begin{aligned}
\Expect[\piload_t] =
\Expect[\piload_{t+1} ]
 &= \Expect[\piload_t]   P_0 
 \\
  & \ \ + \Expect[\zeta_t\piload_t ]\clE 
  \\
  & \ \  +\half \Expect[\zeta_t^2  \piload_t ]\clW + O(\epsy^3) 
\end{aligned}
\label{e:piepsyFixedPt}
\end{equation}
To approximate $\Expect[\zeta_t^2  \piload_t ]$ we use $\piload_t = \piload_t^\bullet +\tilO(\epsy)$.  This combined with independence of $\bfpiload^\bullet ,\bfzeta$ gives, 
\[
\Expect[\zeta_t^2  \piload_t ] 
= \Expect[\zeta_t^2  \piload_t^\bullet ] + \Expect[\zeta_t^2 ( \piload_t - \piload_t^\bullet) ]
= \Expect[\zeta_t^2] \Expect[  \piload_t^\bullet ] +O(\epsy^3)  
=\sigma^2_{\zeta} \pi_0 +O(\epsy^3) 
\]
Substituting this   into
\eqref{e:piepsyFixedPt} gives the approximate  
fixed-point equation,
\begin{equation}
\tilpi_\epsy
=  \tilpi_\epsy P_0  + \xi  + O(\epsy^3) 
\label{e:pi_epsy_poi}
\end{equation}
where   $\tilpi_\epsy = \pi_\epsy -\pi_0$,  and $\xi$ is defined in \eqref{e:xiUnderA1A2}.

The matrix $I-P_0$ is not invertible since it has eigenvalue at $0$.  To obtain an invertible matrix we note that $\tilpi_\epsy [\One\otimes\pi_0] =0$ for any $\epsy$,  and hence \eqref{e:pi_epsy_poi} is equivalent to the vector equation,
\[
\tilpi_\epsy [I-P_0+\One \otimes \pi_0] = \tilpi_\epsy [I-P_0] = \xi \epsy^2+ O(\epsy^3) 
\]
The desired result \eqref{e:pi-epsy}
 is obtained by inversion:
\[
\tilpi_\epsy = \xi  [I-P_0 +\One \otimes \pi_0] ^{-1}   + O(\epsy^3) 
			= \xi  U_1  + O(\epsy^3) 
\]
 \qed

\section{Cross-covariance with $\bfzeta$}

 Approximations for $R_{D, \zeta} $ and  $R_{B\zeta}$ are relatively simple because $\zeta_t=O(\epsy)$.

\subsection{Cross-covariance between $D$ and $\zeta$}

Using the coupling result we obtain an  approximation for the cross-correlation function,
\begin{equation}
\begin{aligned}
R_{D,\zeta}(t) & = \Expect [D_{t+1}^\transpose \zeta_1]
\\
                        &= \Expect[(B_t^\transpose \zeta_t +\Delta_{t+1})^\transpose \zeta_1] +O(\epsy^3)
\\
		      &= \Expect[B_t^\bullet   \zeta_t \zeta_1] +O(\epsy^3)
\\
                        &= B R_\zeta(t-1) + O(\epsy^3).		             
\end{aligned}
\label{t:RDzeta-proof}
\end{equation}

\subsection{Cross-covariance between $\bfDelta\bfDelta^\transpose$ and $\bfzeta$}
\label{s:e:RD2z}

\spm{2016:  commented out statement regarding $R_{\Delta^2,\zeta}(-t)$.  We don't restrict $t$ to be positive}

Recall the $\sigma$-algebra $\clF_s = \sigma \{\zeta_{-\infty}^{\infty}, N_{-\infty}^s\}$
introduced in \Section{t:proof_CovDecomp}.  Taking $s=-1$ we obtain from the smoothing property of the conditional expectation,
\[
R_{\Delta^2,\zeta}(-t)  =\Expect[\Delta_0^\transpose \Delta_0 \zeta_t]= \Expect[\zeta_t \Expect[\Delta_0^\transpose \Delta_0\mid \clF_{-1} ]  ]\,, \quad t\in\ZZ\, .
\]
The conditional expectation is a matrix that is denoted 
\begin{equation}
\begin{aligned}
\clX &= \Expect [\Delta_0^\transpose \Delta_0| \clF_{-1}] 
= \diag  (\piload_{-1} P_{\zeta_{-1}}) 
			- P_{\zeta_{-1}}^\transpose \Lambda_{-1}^{\piload } P_{\zeta_{-1}}
\end{aligned}  
\label{e:clX}
\end{equation}

\begin{lemma}
\label{t:clXzeta}
For $t\ge 0$ we have 
\[
\begin{aligned}
R_{\Delta^2,\zeta}(-t) &=\Expect [ \zeta_t \clX  ]
\\
&= 
\diag  (R_{\piload,\zeta}(-t-1) P_0) 
	- P_0^\transpose \diag (R_{\piload,\zeta}(-t-1))P_0
	+ R_\zeta(t+1) \Expect [ \clX_\bullet^{(1)} ] +O(\epsy^3) \,,
\end{aligned}
\]
where $\Expect [ \clX_\bullet^{(1)} ] $ is defined below 
\eqref{e:RD2z}.
\end{lemma}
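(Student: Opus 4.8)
The identity $R_{\Delta^2,\zeta}(-t)=\Expect[\zeta_t\clX]$ is already available: because $\clF_{-1}=\sigma\{\zeta_{-\infty}^{\infty},N_{-\infty}^{-1}\}$ contains the entire input sequence, the random variable $\zeta_t$ is $\clF_{-1}$-measurable for every $t$, so the smoothing property of conditional expectation moves $\zeta_t$ inside the expectation that defines $\clX$ in \eqref{e:clX}. The whole task therefore reduces to evaluating $\Expect[\zeta_t\clX]$, with $\clX=\diag(\piload_{-1}P_{\zeta_{-1}})-P_{\zeta_{-1}}^\transpose\Lambda_{-1}^{\piload}P_{\zeta_{-1}}$, to accuracy $O(\epsy^3)$. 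The plan is to insert the second-order Taylor expansion $P_{\zeta_{-1}}=P_0+\clE\zeta_{-1}+\half\clW\zeta_{-1}^2+O(\epsy^3)$ into both pieces of $\clX$ and then sort the resulting terms by the number of factors of $\zeta_{-1}$ they carry.

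First I would expand the two pieces. The diagonal piece gives $\diag(\piload_{-1}P_0)$ at order $\zeta_{-1}^0$, $\zeta_{-1}\diag(\piload_{-1}\clE)$ at order $\zeta_{-1}$, and an $O(\zeta_{-1}^2)$ remainder. The quadratic piece $P_{\zeta_{-1}}^\transpose\Lambda_{-1}^{\piload}P_{\zeta_{-1}}$ gives $P_0^\transpose\Lambda_{-1}^{\piload}P_0$ at order $\zeta_{-1}^0$, $\zeta_{-1}\bigl(\clE^\transpose\Lambda_{-1}^{\piload}P_0+P_0^\transpose\Lambda_{-1}^{\piload}\clE\bigr)$ at order $\zeta_{-1}$, and again an $O(\zeta_{-1}^2)$ remainder. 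Multiplying through by $\zeta_t$ and taking expectations, the order-$\zeta_{-1}^0$ contributions produce $\diag\bigl(\Expect[\zeta_t\piload_{-1}]\bigr)$ and $P_0^\transpose\diag\bigl(\Expect[\zeta_t\piload_{-1}]\bigr)P_0$, and by stationarity $\Expect[\zeta_t\piload_{-1}]=R_{\piload,\zeta}(-t-1)^\transpose$; these are precisely the first two terms of the claimed formula, and each is genuinely $O(\epsy^2)$ since $R_{\piload,\zeta}=O(\epsy^2)$ by \Proposition{t:pi_epsy+R_Gz}(iii).

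The substantive step is the $\epsy$-order bookkeeping for the remaining contributions. Each order-$\zeta_{-1}$ term is multiplied by $\zeta_t$, producing a factor $\Expect[\zeta_t\zeta_{-1}\piload_{-1}]$; here I would invoke the coupling bound $\piload_{-1}=\piload_{-1}^\bullet+\tilO(\epsy)$ of \Proposition{t:piload-zeta} together with the independence of $\bfpiload^\bullet$ and $\bfzeta$ to replace $\piload_{-1}$ by its mean, $\Expect[\zeta_t\zeta_{-1}\piload_{-1}]=R_\zeta(t+1)\pi_0+O(\epsy^3)$, the error being $O(\epsy^3)$ because the $\tilO(\epsy)$ deviation multiplies the $O(\epsy^2)$ quantity $\Expect[\zeta_t\zeta_{-1}]=R_\zeta(t+1)$. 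Every order-$\zeta_{-1}^2$ remainder then carries three factors of $\zeta$ overall, hence is $O(\epsy^3)$ and is discarded. Collecting the surviving $R_\zeta(t+1)$ terms yields the coefficient $\diag(\pi_0\clE)-\bigl(P_0^\transpose\Pi_0\clE+\clE^\transpose\Pi_0P_0\bigr)$, and since $\Pi_0$ is diagonal, $\clE^\transpose\Pi_0P_0=[P_0^\transpose\Pi_0\clE]^\transpose$, which identifies this coefficient with $\Expect[\clX_\bullet^{(1)}]$ and finishes the proof. I do not expect a real obstacle here; the only thing demanding care is keeping the row/column orientation of $R_{\piload,\zeta}$ consistent through the $\diag(\cdot)$ and transpose operations, which is exactly where the notational transpose conventions of the statement must be matched.
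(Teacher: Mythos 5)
Your proposal is correct and follows essentially the same route as the paper's proof: reduce to $\Expect[\zeta_t\clX]$ via the smoothing property (using that $\zeta_t$ is $\clF_{-1}$-measurable), Taylor-expand $\clX$ in $\zeta_{-1}$, keep the zeroth-order terms exactly as $\diag(R_{\piload,\zeta}(-t-1)P_0)-P_0^\transpose\diag(R_{\piload,\zeta}(-t-1))P_0$, and handle the first-order coefficient by the coupling bound plus independence of $\bfpiload^\bullet$ and $\bfzeta$, with the quadratic remainder absorbed into $O(\epsy^3)$. The paper organizes the expansion as $\clX=\clX^{(0)}+\zeta_{-1}\clX^{(1)}+O(\epsy^2)$ rather than expanding $P_{\zeta_{-1}}$ inside each piece separately, but the computation and the error accounting are identical.
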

\begin{proof}
A first order Taylor series approximation gives
$
\clX= \clX^{(0)} + \zeta_{-1} \clX^{(1)} + O(\epsy^2)$, where   
\begin{align*}
 \clX^{(0)} 
 	&=   \diag (\piload_{-1} P_0) - P_0^\transpose \diag  (\piload_{-1}) P_0 
 \\
 \clX^{(1)} 
 	&  = \diag (\piload_{-1} \clE) -\left(P_0^\transpose \diag (\piload_{-1}) \clE  
	+ [P_0^\transpose\diag (\piload_{-1}) \clE]^\transpose \right).
\end{align*}
Hence, for $t\ge 0$,
\begin{align*}
\Expect[\zeta_t \clX] & = \Expect [\zeta_t (\clX^{(0)} + \zeta_{-1} \clX^{(1)} + O(\epsy^2)) ] 
\\
					      &= \Expect [ \zeta_t \clX^{(0)} ] + \Expect [ \zeta_t \zeta_{-1} \clX_\bullet^{(1)} ] +O(\epsy^3)
\\
					      &= \Expect [ \zeta_t \clX^{(0)} ]
					       + R_\zeta(t+1)\Expect [ \clX_\bullet^{(1)} ] +O(\epsy^3)
\end{align*}
where, in the second equality we used $\clX^{(1)} = \clX_\bullet^{(1)} +\tilO(\epsy)$ with
\[ 
\clX_\bullet^{(1)}= \diag (\piload^\bullet_{-1} \clE) -\left(P_0^\transpose \diag (\piload^\bullet_{-1}) \clE  + [P_0^\transpose\diag (\piload^\bullet_{-1}) \clE]^\transpose \right)\,,
\]
and also used the fact that $\piload^\bullet_{-1}$ is independent of $\bfzeta$.   We   have by the definitions,
\[
\Expect [ \zeta_t \clX^{(0)} ] = \diag  (R_{\piload,\zeta}(-t-1) P_0) - P_0^\transpose \diag (R_{\piload,\zeta}(-t-1))P_0.  
\]
Substitution into the previous approximation for $\Expect[\zeta_t \clX]$ completes the proof.
\end{proof}

\subsection{Auto-correlation of $\bfmB \bfzeta$}
\label{s:e:R_bz}

Applying the coupling result \eqref{e:BVDeltaBdds} in \Proposition{t:piload-zeta} gives $B^\transpose_t = {B^{\bullet}}^\transpose_t + \tilO(\epsy)$, where 
  $B^\transpose_t = \piload_t \clE$.   Hence,
\begin{equation}
R_{B\zeta}(t) =\Expect[B_t\zeta_tB_0^\transpose \zeta_0]= \Expect[B_t^\bullet \zeta_t (B_{0}^\bullet \zeta_0)^\transpose]  + O(\epsy^3) =  R_{B^\bullet \zeta}(t) + O(\epsy^3) 
\label{e:Proof:e:R_bzA}
\end{equation}
Independence of $\bfzeta$ and $\piload_t^\bullet $ implies a formula  for the simpler auto-correlation:
\begin{equation} 
\begin{aligned}
R_{B^\bullet \zeta}(t) &\eqdef \Expect[B^\bullet_t \zeta_t (B^\bullet_0 \zeta_0)^\transpose]
\\
& =  \Expect[ (\piload_t^\bullet \clE)^\transpose ( \piload_t^\bullet \clE)] \Expect[\zeta_t \zeta_0]
\\
&= \clE^\transpose R_{\piload^\bullet}(t)\clE R_\zeta(t)
\end{aligned}
\label{e:Proof:e:R_bzB}
\end{equation}
A formula for $R_{\piload^\bullet}(t)$ is given next: 
For $t\ge 0$,  
\begin{equation} 
R_{\piload^\bullet}(t) = \Expect [ (\piload_t^\bullet)^\transpose \piload_0^\bullet ] = \Expect [ (\piload_0^\bullet P_0^t)^\transpose \piload_0^\bullet] = (P_0^\transpose)^t \Expect[(\piload_0^\bullet)^\transpose \piload_0^\bullet] = (P_0^\transpose)^t \Pi_0
\label{e:Proof:e:R_bzC}
\end{equation}
where the last equality is from the fact that $\piload_t$ has binary entries and $\Expect[\piload_t] = \pi_0$.

Combining \eqref{e:Proof:e:R_bzA}--\eqref{e:Proof:e:R_bzC}
completes the proof of \eqref{e:R_bz}.
\qed

\subsection{Cross-covariance between $\piload$ and $\bfzeta$}
\label{s:RGzGen}

The derivation of  \eqref{e:RGzGen} is obtained via a recursion, similar to the calculation in \Section{s:pi-epsy}.  It is simplest to work with the row vectors $\nu_k =  \bigl( R_{\piload,\zeta}(k) \bigr)^\transpose = \Expect [\piload_{k} \zeta_0] $.
For any $k\in\ZZ$,
\begin{align*}
	\nu_{k+1}&= \Expect [\piload_{k+1} \zeta_0] \\
	        &= \Expect[(\piload_k P_{\zeta_k} + \Delta_{k+1}) \zeta_0]\\
	        & = \Expect[\piload_k P_{\zeta_k} \zeta_0]\\
	        & = \Expect[\piload_k (P_0 + \clE \zeta_k +O(\epsy^2) ) \zeta_0]\\
	         &= \nu_k P_0 +  \Expect [ \piload_k \clE \zeta_k \zeta_0 ]+O(\epsy^3)
\end{align*}
where in the third equation we used the fact that the sequences $\bfDelta$ and $\bfzeta$ are uncorrelated.
Recalling the definition $B^\transpose = \pi \clE = \Expect[\piload_k^\bullet] \clE$, 
and applying the coupling result $\piload_k = \piload_k^\bullet + \tilO(\epsy)$ gives
\[
\Expect[\piload_k \clE \zeta_k \zeta_0 ] = \Expect[\piload_k^\bullet \clE \zeta_k \zeta_0 ] +O(\epsy^3) = B^\transpose R_\zeta (k)+ O(\epsy^3).
\]
Hence, $\nu_{k+1} =  \nu_k P_0 + B^\transpose R_\zeta (k)+ O(\epsy^3)$;  that is, there is a bounded sequence of row vectors $\{\nudist_k\}$ such that 
\[
\nu_{k+1} = \nu_k P_0 + B^\transpose R_\zeta (k)+ \epsy^3 \nudist_k
\]
It follows from 
\Lemma{t:FP} that $B^\transpose \One=0$.  Moreover, 
since $\piload_{t-n}$ is a pmf, we have for any $k,\ell$,
\begin{equation}
\nu_k\One = \Expect[\zeta_k \piload_{\ell} ]  \One  =\Expect[\zeta_k]=0.
\label{e:RpiloadOne}
\end{equation}
It then follows that $\nudist_k\One =0$ for each $k$.

On iterating, we obtain for any integer $n \ge1$, 
\[
\nu_{k+n} = \nu_k P_0^n + \sum_{i=1}^n [B^\transpose +\epsy^3 \nudist_i]P_0^{i-1} R_{\zeta}(k+n-i) .
\]
Now, substitute $t=k+n$, where $t\in\ZZ$ is a fixed integer,  and $n$ is a large positive integer:
\begin{equation}	
	\nudist_t = R_{\piload,\zeta}(t-n) P_0^n +  \sum_{i=1}^n  [B^\transpose +\epsy^3 \nudist_i] P_0^{i-1} R_{\zeta}(t-i) .
\label{e:R_iter}
\end{equation}
For large $n$ we have $P_0^n = \One \otimes \pi_0 + o_e(1)$,  where $o_e(1)\to 0$ geometrically fast as $n\to\infty$.   Consequently,
\[
\begin{aligned}
 R_{\piload,\zeta}(t-n) P_0^n  &= \Expect[\zeta_0 \piload_{t-n} ]  \One \otimes \pi_0  + o_e(1)
 \\
	\nudist_i P_0^{i-1} & =  \nudist_i\One \otimes \pi_0  + o_e(1)
\end{aligned}
\]
We have seen previously following \eqref{e:RpiloadOne} that $\nudist_i\One=0$. It follows similarly that $ \Expect[\zeta_0 \piload_{t-n} ]  \One  =\Expect[\zeta_0]=0$, from which we conclude that
\[
\begin{aligned}
 R_{\piload,\zeta}(t-n) P_0^n &=   o_e(1)
  \\
  \text{\it and}\quad
  \sum_{i=1}^\infty  \|   \nudist_i P_0^{i-1}   R_{\zeta}(t-i) \|  & <\infty
\end{aligned}
\]
This justifies letting $n\to\infty$ in \eqref{e:R_iter} to obtain,
\[
	\nudist_t =   B^\transpose\sum_{i=1}^\infty  P_0^{i-1} R_{\zeta}(t-i) + O(\epsy^3)\,,
\]
which is equivalent to \eqref{e:RGzGen}.
\qed

Based on this result we now establish \Proposition{t:R_Gz}. 
  It is sufficient to establish the following approximation:
  \spm{2016 -- note big reorganization}
\begin{equation}
 R_{\piload,\zeta}(0) 
= \epsy^2    \sum_{k=1}^{n_z} a_k \rho_k[I-\rho_k P_0^\transpose ]^{-1} B + O(\epsy^3)
\label{e:t:R_Gz}
\end{equation}
The representation \eqref{e:xiUnderA3}
 for $\xi$ then follows immediately from the original definition  \eqref{e:xiUnderA1A2}.
 
Recall that $\nu_k =   \Expect [\piload_{k} \zeta_0] $, $k\in\ZZ$.
Under \textbf{A3} we apply \eqref{e:R_zeta} to conclude that for   $t\le 1$ and $i\ge 1$,
\[
R_{\zeta}(t-i) = \epsy^2 \sum_{k=1}^{n_z}a_k\rho_k^{i-t}\, .
\]
For these values of $t$ and $i$ we have $i-t = i+|t|$,  and hence the approximation \eqref{e:RGzGen} gives,
\[
\nu_t
	=  \epsy^2 \sum_{k=1}^{n_z}  \sum_{i=1}^\infty 
	 B^\transpose  P_0^{i-1} a_k\rho_k^{i+|t|} + O(\epsy^3).
\]
Rearranging terms,  and letting $j=i-1$ gives,
\[
\nu_t
	=  \epsy^2 B^\transpose \sum_{k=1}^{n_z} a_k\rho_k^{1+|t|} 
	 \sum_{j=0}^\infty 
	  P_0^{j} \rho_k^{j} + O(\epsy^3)\, .
\]
On setting $t=0$ and taking transposes, this becomes \eqref{e:t:R_Gz}
\qed

\section{Proof of \Theorem{t:R_D}}
\label{t:proof_R_D}

The representation of $R_D(t)$ as the sum of the terms (\ref{e:R_DBz}--\ref{e:R_DV}), plus an error of order $\epsy^3$,  follows immediately from the arguments  preceding the theorem.  The proof of   \eqref{e:R_Dz} is given in  \eqref{t:RDzeta-proof}.

Next we consider each of the terms (\ref{e:R_DBz}--\ref{e:R_DV}) separately.

The approximation for $R_{B \zeta}(t) $ that appears  in \eqref{e:R_DBz} was given in \Section{s:e:R_bz}.   Consider next the covariance \eqref{e:R_DDelta}.

\notes{no need to restate formula for  $R_D(t)$.}

\subsection{Computation of $\Sigma^\Delta$ in \eqref{e:R_Delta}}

Take the expectation inside the $\diag$ operator in \eqref{e:smoothedSigmaDelta} to obtain
\begin{equation}
 \Sigma^\Delta = \Pi_\epsy -\Expect [P_{\zeta_t}^\transpose \diagpiload_t P_{\zeta_t} ],
 \label{e:psd_Delta}
\end{equation}
where, $\Pi_\epsy = \diag (\pi_\epsy)$, with $\pi_\epsy$ estimated in \Proposition{t:pi_epsy+R_Gz}. 

We approximate $\Expect [P_{\zeta_t}^\transpose \diagpiload_t P_{\zeta_t} ]$ within the expectation using a second order Taylor series expansion.  The random matrix $\diagpiload_t$ has binary entries, so we leave it fixed in this approximation.  

For any scalar $\zeta$ and matrix $\Lambda$ we have the Taylor series approximation,
\begin{equation}
\begin{aligned}
 P_\zeta^\transpose \Lambda P_\zeta  =    P_0^\transpose \Lambda P_0   + \zeta M^{(1)}  + \half \zeta^2 M^{(2)} +O(|\zeta|^3) 
\end{aligned}	
\label{e:PLP}
\end{equation}
where $M^{(i)}$ is the $i$th derivative of $P_\zeta^\transpose \Lambda P_\zeta$ with respect to $\zeta$, evaluated at $\zeta=0$. 
\def\ddzeta{{\textstyle \frac{d}{d\zeta}}}

To obtain $M^{(1)}$,  first differentiate using the product rule:
\[
\begin{aligned}
\ddzeta  P_\zeta^\transpose \Lambda P_\zeta  
&=
  P_\zeta^\transpose \Lambda P_\zeta'+ [ P_\zeta^\transpose \Lambda P_\zeta']^\transpose  
\end{aligned}
\]
Evaluating at  $\zeta=0$ gives,
\[
M^{(1)} =  P_0^\transpose \Lambda \clE+ [ P_0^\transpose \Lambda \clE ]^\transpose  
\]
Similarly, given the  second derivative,
\[
\begin{aligned}
{ \textstyle \frac{d^2}{d\zeta^2}}
 P_\zeta^\transpose \Lambda P_\zeta   
=
(    {P_\zeta'}^\transpose \Lambda P_\zeta' +   P_\zeta^\transpose \Lambda P_\zeta'')+
(    {P_\zeta'}^\transpose \Lambda P_\zeta' +   P_\zeta^\transpose \Lambda P_\zeta'')^\transpose 
\end{aligned}
\]
we obtain
\[
M^{(2)}
=P_0^\transpose \Lambda \clW + 2  {\clE}^\transpose \Lambda \clE +   \clW^\transpose \Lambda P_0   
\]

Substituting $\zeta_t=\zeta$ and $\Lambda_t=\Lambda$ in
\eqref{e:PLP} gives the approximation,  
 \begin{equation}
 \begin{aligned}
 \Expect[   P_{\zeta_t}^\transpose \diagpiload_t P_{\zeta_t} ] 
 &= \Expect [ P_0^\transpose \diagpiload_t P_0 ]
 \\
 &
 \quad + \Expect[ (P_0^\transpose \diagpiload_t \clE  
 		+  {\clE}^\transpose \diagpiload_t P_0 )  \zeta_t ]
 \\
 &\quad + \half \Expect [ ( 
 P_0^\transpose \diagpiload_t \clW + 2  {\clE}^\transpose \diagpiload_t \clE +      {\clW}^\transpose \diagpiload_t P_0 )  \zeta_t^2 ] + O(\epsy^3)
\end{aligned}
\label{e:smoothedSigmaDelta2}
\end{equation}
The first term on the RHS can be computed   using \Proposition{t:pi_epsy+R_Gz}.
The second expectation involves $\Expect[\zeta_t\piload_t]$, which is computed in \Lemma{t:R_Gz}.
For the third term we replace $\piload_t$ by $\piload_t^\bullet + \tilO(\epsy)$, where $\piload_t^\bullet$ has mean $\pi_0$ and is independent of $\bfzeta$.  Combining all of these approximations gives   the following approximation for the second term in \eqref{e:psd_Delta}:
\begin{equation}
 \begin{aligned}
\Expect[  P_{\zeta_t}^\transpose \diagpiload_t P_{\zeta_t} ] 
 &=  P_0^\transpose \Pi_\epsy P_0 \\
 &   \quad + P_0^\transpose \diag ( \Expect [\piload_t \zeta_t]  \clE
 		+  \clE^\transpose \diag ( \Expect [\piload_t \zeta_t]) P_0 \\
&   \quad + \half ( 
 P_0^\transpose \Pi_0 \clW+ 2  {\clE}^\transpose \Pi_0  \clE +  {\clW}^\transpose \Pi_0 P_0 )  \Expect [ \zeta_t^2 ] + O(\epsy^3)
\end{aligned}
\end{equation}
where, $\Pi_0 = \diag (\pi_0)$ and $\Pi_\epsy = \diag (\pi_\epsy)$.  
This gives \eqref{e:R_Delta} since $R_{\piload,\zeta}(0) = \Expect [\piload_t^\transpose \zeta_t]$.
\qed

\subsection{Computation of $ R_{B\zeta,\Delta}(t)$ in \eqref{e:R_bzD}}
\label{t:proof_R_bzD}
This is the most complex part of the proof.

We consider three cases separately:  For  $t< 0$ we have demonstrated that $ R_{B\zeta,\Delta}(t)=0 $.  The second case is $t=0$.   The approximation for $ R_{B\zeta,\Delta}(0)$ is used as an initial condition in a recursive approximation for  $ R_{B\zeta,\Delta}(t)$,  $t\ge 1$.

An approximation for $R_{B\zeta, \Delta}(0) $ is obtained from  \Lemma{t:clXzeta}.
Using  $\Expect[ \Delta_0 | \clF_{-1}]=0$, we obtain 
\begin{equation}
\begin{aligned}
	R_{B\zeta, \Delta}(0) &= \Expect[(\piload_0 \clE \zeta_0)^\transpose \Delta_0] 
			\\
	&= \clE^\transpose \Expect[\zeta_0 (\piload_{-1}P_{\zeta_{-1}} +\Delta_0)^\transpose  \Delta_0] 
					\\
	&= \clE^\transpose \Expect[\zeta_0 (\piload_{-1}P_{\zeta_{-1}} )^\transpose \Delta_0]  +\clE^\transpose R_{\Delta^2,\zeta}(0)
					\\
	&= \clE^\transpose \Expect[ \zeta_0 (\piload_{-1}P_{\zeta_{-1}} )^\transpose \Expect[ \Delta_0 | \clF_{-1}] ]  
		+\clE^\transpose R_{\Delta^2,\zeta}(0)
 \\
	&=  \clE^\transpose R_{\Delta^2,\zeta}(0)
\end{aligned}
\label{e:R_BzD0}
\end{equation}
\Lemma{t:clXzeta} provides an  approximation for $R_{\Delta^2,\zeta}(0)$.  Substituting this
into \eqref{e:R_BzD0} gives the approximation for $R_{B\zeta, \Delta}(0)$   shown in \eqref{e:R_bzD}.
\qed


In the remainder of this subsection we consider  $t\ge 1$.
Iterating \eqref{e:piG} gives,
\[
B_t^\transpose  =\piload_t \clE =   \piload_0 G_{1}  G_{2}\cdots G_t  \clE 
\]
Each term is conditionally independent of $\Delta_0$, given $\bfzeta$, except for $ \piload_0 =   \piload_{-1} G_{0} = \piload_{-1}  P_{\zeta_{-1}} +\Delta_0$.
Using the fact that $ \piload_{-1}  P_{\zeta_{-1}} $ is also conditionally independent of $\Delta_0$, we obtain
\begin{align}
R_{B\zeta,\Delta}(t) 
&=
\Expect[ \zeta_t  \clE^\transpose P_{\zeta_{t-1}}^\transpose  P_{\zeta_{t-2}}^\transpose \cdots  P_{\zeta_0}^\transpose [   \piload_{-1}  P_{\zeta_{-1}} + \Delta_0]^\transpose \Delta_0] \nonumber
\\
&= \clE^\transpose \Expect [\zeta_t P_{\zeta_{t-1}}^\transpose \cdots P_{\zeta_0}^\transpose \Expect [\Delta_0^\transpose \Delta_0| \clF_{-1}] ] \nonumber
\\
&= \clE^\transpose \Expect [\zeta_t P_{\zeta_{t-1}}^\transpose \cdots P_{\zeta_0}^\transpose \clX ] \label{e:R_bzD_A}
\end{align}
where $\clX$ was introduced in \eqref{e:clX}.

Denote $A_i = P_{\zeta_i}^\transpose$,
 $A = P_0^\transpose$, and
the matrix product $\Omega_t=A_t  A_{t-1} \cdots  A_0 $, for   $t\ge 0$. 
We set $\Omega_t=I$ for $t<0$. 
The product  is approximated in the following:    
\begin{lemma}
\label{t:Qt}
For $t\ge 0$,
\begin{equation}
\Omega_t = A^{t+1}  + \sum_{i=0} ^{t} A^{t-i} \clE^\transpose  A^i  \zeta_i + O(\epsy^2).
\label{e:Qt}
\end{equation}
\end{lemma}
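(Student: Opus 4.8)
The plan is to expand each factor $A_i=P_{\zeta_i}^\transpose$ to first order in $\epsy$ and then collect terms in the ordered product. Transposing the second-order Taylor expansion of $P_\zeta$ recorded just before \Proposition{t:piload_linear}, and using $\zeta_i=\epsy\zeta^1_i=O(\epsy)$, Assumption~\textbf{A1} yields the per-factor expansion
\[
A_i = A + \clE^\transpose \zeta_i + O(\epsy^2),
\]
where the quadratic term $\tfrac12\clW^\transpose\zeta_i^2$ and the cubic remainder have been absorbed into $O(\epsy^2)$. Under \textbf{A2}(i) we have $|\zeta^1_i|\le 1$, and the second derivative of $P_\zeta$ is Lipschitz under \textbf{A1}, so this $O(\epsy^2)$ term is bounded in norm by a deterministic constant times $\epsy^2$, almost surely.

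I would then prove \eqref{e:Qt} by induction on $t$. The base case $t=0$ is immediate: the claimed right-hand side reduces to $A+\clE^\transpose\zeta_0+O(\epsy^2)$, which is exactly the per-factor expansion of $A_0=\Omega_0$. For the inductive step, write $\Omega_t = A_t\Omega_{t-1}$ and substitute both the per-factor expansion of $A_t$ and the inductive hypothesis for $\Omega_{t-1}$:
\[
\Omega_t = \bigl(A+\clE^\transpose\zeta_t+O(\epsy^2)\bigr)\Bigl(A^t+\sum_{i=0}^{t-1}A^{t-1-i}\clE^\transpose A^i\zeta_i+O(\epsy^2)\Bigr).
\]
Here $A\cdot A^t$ gives $A^{t+1}$; the cross term $A\cdot\sum_i(\cdots)$ raises the leading power of $A$ by one to produce $\sum_{i=0}^{t-1}A^{t-i}\clE^\transpose A^i\zeta_i$; and $\clE^\transpose\zeta_t\cdot A^t$ supplies precisely the missing $i=t$ summand $\clE^\transpose A^t\zeta_t = A^{t-t}\clE^\transpose A^t\zeta_t$. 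Every remaining product contains at least two factors of order $\epsy$ (a $\zeta$ times a $\zeta$, or a $\zeta$ times an $O(\epsy^2)$ remainder), hence is absorbed into $O(\epsy^2)$. Collecting the survivors gives the claim for $t$.

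An equivalent non-inductive route is to expand the ordered product $\prod_{i=t}^{0}\big(A+\clE^\transpose\zeta_i+O(\epsy^2)\big)$ multilinearly: the zeroth-order term is $A^{t+1}$, and the first-order term obtained by selecting the factor $\clE^\transpose\zeta_i$ at position $i$ and $A$ at every other position contributes $A^{t-i}\clE^\transpose A^i\zeta_i$, where $A^{t-i}$ counts the factors standing to the left of position $i$ and $A^i$ those to the right; summing over $i$ reproduces \eqref{e:Qt}. The only point requiring care is the uniformity of the error: since $A$, $\clE^\transpose$ and each factor have bounded norm and there are exactly $t+1$ factors, the finitely many discarded higher-order terms are each $O(\epsy^2)$ with a constant that may depend on $t$ but not on the realization of $\bfzeta$ — which is what the statement asserts for fixed $t$. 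I expect no genuine obstacle here beyond bookkeeping; the main thing to track carefully is the order in which the powers of $A$ accumulate on either side of the single $\clE^\transpose$ insertion.
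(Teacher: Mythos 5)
Your algebra is sound and the induction/multilinear expansion does reproduce the displayed formula, but it misses the one point that the paper's proof is actually built around: the $O(\epsy^2)$ error must be \emph{uniform in $t$}, and your argument does not deliver that. As you concede, your discarded terms number $O(t^2)$ (second-order selections in the product, plus the error inherited and re-amplified at each inductive step), so the constant in your $O(\epsy^2)$ grows with $t$. That is fatal for the way the lemma is used: \eqref{e:Qt} is substituted into the computation of $R_{B\zeta,\Delta}(t)$ for \emph{all} $t\ge 1$ in \Theorem{t:R_D}(c), and those approximations must have errors bounded by $\exp(b_1)\epsy^3$ uniformly in $t$ so that they can be summed over $t$ in \Proposition{t:psdDecompA} and \Proposition{t:psdDecomp}. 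An error of size $t^2\epsy^2$ in $\Omega_t$ would destroy the summability argument behind \eqref{e:epsylogepsybdd}.

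The missing ingredient is \Lemma{t:FP}. The paper writes $\Omega_t = A\Omega_{t-1} + E_t + F_t$ with $E_t=\clE^\transpose\zeta_t\Omega_{t-1}$ and $\|F_t\|=O(\epsy^2)$, exactly as you do, but then uses $\clE\One=\Zero$ and $\One^\transpose F_i=\Zero^\transpose$ together with the geometric ergodicity $P_0^n\to\One\otimes\pi_0$ to get
\[
\|A^n\clE^\transpose\|\le\kappa\varrho^n,\qquad \|A^nF_i\|\le\kappa\varrho^n\epsy^2,\qquad 0<\varrho<1.
\]
These decay estimates make the unrolled error sums $\sum_{i=1}^t\|A^{t-i}F_i\|\le\kappa\epsy^2/(1-\varrho)$ and the substitution $\Omega_{i-1}\zeta_i=A^i\zeta_i+O(\epsy^2)$ (weighted by the geometrically decaying $\|A^{t-i}\clE^\transpose\|$) uniformly bounded in $t$. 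Your per-factor expansion is fine as bookkeeping, but without observing that $\clE^\transpose$ and the remainders annihilate $\One$ — so that the intervening powers of $A$ contract rather than merely stay bounded — you cannot close the argument at the strength the paper needs. I would count this as a genuine gap rather than a stylistic difference.
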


\spm{2016:  added comment in response to Yue's question}

The proof of \eqref{e:Qt}
 is postponed to the end of this subsection.

Once we have established this lemma,
we then have the complete cross-correlation:

\begin{proof}[Proof of approximation \eqref{e:R_bzD} for $R_{B\zeta, \Delta}(t)$, $t\ge 1$.]
We return to
\eqref{e:R_bzD_A}, which can be expressed
\[
 R_{B\zeta, \Delta}(t) = \clE^\transpose \Expect [\zeta_t   \Omega_{t-1}  \clX ],\qquad t\ge 1. 
\]
Substituting the bound in \Lemma{t:Qt},
\begin{align*}
R_{B\zeta, \Delta}(t)   
	&= \clE^\transpose  \Expect[ \zeta_t A^t \clX] 
		+   \clE^\transpose
		 \Expect \Bigl[ \zeta_t  \sum_{i=0} ^{t-1} A^{t-1-i} \clE^\transpose  A^i \zeta_i  \clX \Bigr] 
		 	+ O(\epsy^3) 
	 \\
	 &=  \clE^\transpose A^t  \Expect [ \zeta_t \clX  ] 
	 + 
	 \clE^\transpose \Expect \Bigl[ \zeta_t  \sum_{i=0} ^{t-1} A^{t-1-i} \clE^\transpose  A^i \zeta_i  \left( \clX^\bullet  +\tilO(\epsy) \right) \Bigr]+ O(\epsy^3) 
	 \\
	  &=  \clE^\transpose A^t \Expect [ \zeta_t \clX  ] 
	 	+  \clE^\transpose \sum_{i=0} ^{t-1} A^{t-1-i} \clE^\transpose A^i R_\zeta (t-i) \Sigma^{\Delta^\bullet} + O(\epsy^3)
\end{align*}
This establishes  \eqref{e:R_bzD} since  $\Expect [ \zeta_t \clX  ] = R_{\Delta^2,\zeta}(-t)$.
\end{proof}

The proof of \Lemma{t:Qt} uses a Taylor series approximation for $\Omega_t$:
\[
\Omega_{t} =  A_{t} \Omega_{t-1} = A\Omega_{t-1} + W_t\, ,
\]
where $W_t = E_t+F_t$;
$E_t = \clE^\transpose \zeta_t \Omega_{t-1}$ is the first order term in the Taylor series approximation, and $F_t$ is the approximation error whose norm is bounded by $O(\epsy^2)$.   

The following result provides a uniform bound on $A^{t-i}F_i$ for each $t$ and $i$,  and also $
A^{t-i} \clE^\transpose $  (which appears in \eqref{e:Qt}).

\begin{lemma}
\label{t:FP}
The identity $\clE \One =\Zero$ holds.  Consequently, there exists $0<\kappa<\infty$ and $0<\varrho<1$ such that 
\[
\begin{aligned}
\|A^n \clE^\transpose \| = \| \clE P_0^n \|  &\le \kappa \varrho^n  
 \\
 \|  A^n F_i  \|  & \le \kappa \varrho^n  \epsy^2
\end{aligned}
\] 
\end{lemma}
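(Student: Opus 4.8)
The plan is to derive both geometric bounds from a single structural fact---the annihilation identity $\clE\One = \Zero$---combined with the geometric form of the mean ergodic theorem \eqref{e:ergo}. First I would prove $\clE\One = \Zero$. For every $\zeta$ in the neighborhood of $0$ on which $P_\zeta$ is defined, $P_\zeta$ is a row-stochastic matrix, so $P_\zeta\One = \One$ holds identically in $\zeta$. Differentiating this identity at $\zeta = 0$ and using that $\One$ is constant in $\zeta$ gives $\clE\One = \frac{d}{d\zeta}(P_\zeta\One)\big|_{\zeta=0} = \Zero$.

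For the bound on $\clE P_0^n$, I would write $\One\otimes\pi_0 = \One\pi_0$ as a rank-one outer product and observe that $\clE(\One\otimes\pi_0) = (\clE\One)\pi_0 = \Zero$ by the first part. Subtracting this vanishing term, $\clE P_0^n = \clE\bigl(P_0^n - \One\otimes\pi_0\bigr)$, so submultiplicativity gives $\|\clE P_0^n\| \le \|\clE\|\,\|P_0^n - \One\otimes\pi_0\|$. The mean ergodic theorem \eqref{e:ergo} supplies constants $C<\infty$ and $\varrho\in(0,1)$ with $\|P_0^n - \One\otimes\pi_0\| \le C\varrho^n$, yielding the claimed estimate with $\kappa = C\|\clE\|$. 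The equality $\|A^n\clE^\transpose\| = \|\clE P_0^n\|$ is then immediate, since $A^n = (P_0^n)^\transpose$ forces $A^n\clE^\transpose = (\clE P_0^n)^\transpose$, and the norm is transpose-invariant.

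For $A^n F_i$, recall from the construction preceding the lemma that $F_i = R_i\,\Omega_{i-1}$, where $R_i = A_i - A - \clE^\transpose\zeta_i = P_{\zeta_i}^\transpose - P_0^\transpose - \clE^\transpose\zeta_i$ is the second-order Taylor remainder and $\Omega_{i-1} = A_{i-1}\cdots A_0$. Two facts drive the estimate. First, $\|R_i\| = O(\epsy^2)$ uniformly in $i$, by the $C^2$ hypothesis of \textbf{A1} (Taylor's theorem bounds the remainder after the first-order term) and $|\zeta_i|\le\epsy$. Second, the dual annihilation $\One^\transpose R_i = \Zero$ holds: transposing, $R_i^\transpose\One = (P_{\zeta_i}-P_0-\clE\zeta_i)\One = \One - \One - \Zero = \Zero$, again using row-stochasticity and $\clE\One = \Zero$. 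Writing $A^n = (\One\otimes\pi_0)^\transpose + \bigl(A^n - (\One\otimes\pi_0)^\transpose\bigr)$ and noting $(\One\otimes\pi_0)^\transpose = \pi_0^\transpose\One^\transpose$, the leading rank-one term annihilates $R_i$ because $\One^\transpose R_i = \Zero$, leaving $A^n R_i = \bigl(A^n - (\One\otimes\pi_0)^\transpose\bigr)R_i$, so $\|A^n R_i\| \le C\varrho^n\cdot O(\epsy^2)$ by \eqref{e:ergo}. Finally $\|\Omega_{i-1}\|$ is bounded uniformly in $i$, since $\Omega_{i-1} = (P_{\zeta_0}\cdots P_{\zeta_{i-1}})^\transpose$ is the transpose of a product of stochastic matrices, hence itself stochastic; multiplying gives $\|A^n F_i\| \le \|A^n R_i\|\,\|\Omega_{i-1}\| \le \kappa\varrho^n\epsy^2$ after enlarging $\kappa$.

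The point requiring the most care is the third step. The naive bound $\|A^n F_i\| \le \|A^n\|\,\|F_i\| = O(\epsy^2)$ is uniform in $n$ but carries no decay, whereas the $\varrho^n$ factor is exactly what makes the sums in \eqref{e:R_bzD} converge downstream. The mechanism producing that factor is the annihilation $\One^\transpose R_i = \Zero$ acting against the rank-one ergodic limit of $A^n$; verifying this identity on the correct side (the left, since $A^n$ multiplies $R_i$ from the left) is the substantive content, while the $O(\epsy^2)$ remainder bound and the uniform control of $\Omega_{i-1}$ are routine.
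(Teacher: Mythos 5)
Your proposal is correct and follows essentially the same route as the paper: differentiate the stochasticity identity $P_\zeta\One=\One$ to get $\clE\One=\Zero$, observe that the relevant factor ($\clE$, respectively $F_i$ or your $R_i$ with $F_i=R_i\Omega_{i-1}$) annihilates the rank-one ergodic limit $\One\otimes\pi_0$, and then invoke the geometric convergence $P_0^n\to\One\otimes\pi_0$ to extract the $\varrho^n$ factor. The only cosmetic difference is that you place the annihilation identity on $R_i$ and bound $\|\Omega_{i-1}\|$ separately by stochasticity, whereas the paper verifies $\One^\transpose F_i=\Zero^\transpose$ directly; the substance is identical.
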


\begin{proof} 
Since 	$P_\zeta \One  = \One $ for all $\zeta$, we have 
\[
\clE \One  = \frac{d}{d\zeta} P_\zeta \Big|_{\zeta=0} \One  
	= \frac{d}{d\zeta} P_\zeta \One  \Big|_{\zeta=0} = \frac{d}{d\zeta} \One  = \Zero .  	
\]
Next we apply the ergodic limit \eqref{e:ergo}, recalling that $P_0^n \to \One  \otimes \pi_0$ geometrically fast as $n \to \infty$:  there exists $0<\kappa_0<\infty$ and $0<\varrho<1$ such that
\begin{equation}
\label{e:conv_pi}
\| e_n\|  \le \kappa_0 \varrho^n, \quad \text{\it with \ } e_n = P_0^n - \One  \otimes \pi_0.
\end{equation}
Consequently,  $\clE P_0^n  = \clE P_0^n  \One  \otimes \pi_0 + \clE e_n = \clE e_n $,
which implies the desired bound $\| \clE e_n \| \le \kappa \varrho^n$,  with $\kappa = \kappa_0 \| \clE\|$.

The proof of the second bound is similar: For each $i$ we have
\begin{align*} 
	&\One ^\transpose W_i = \One ^\transpose (A_{i}-A) \Omega_{i-1} = \Zero ^\transpose \\
	&\One ^\transpose E_i = \One ^\transpose \clE^\transpose \Omega_{i-1}\zeta_i  = (\clE \One )^\transpose \Omega_{i-1}\zeta_i  =\Zero ^\transpose 
\end{align*}
We then have $
	\One ^\transpose F_i = \One ^\transpose (W_i - E_i) =  \Zero ^\transpose$,
from which we obtain as before,
\[
F_i^\transpose P_\zeta^n = F_i^\transpose (\One  \otimes \pi_0 + e_n) = F_i^\transpose  e_n 
\]
Applying \eqref{e:conv_pi}, we arrive at the desired bound:
\[
 \|  A^n F_i  \|  =
\|  F_i^\transpose P_0^n \|   = \|  F_i^\transpose e^n \|  \le  \|  F_i \| \, \| e^n \|   \le \epsy^2 \kappa \varrho^n  ,
\]
where $ \epsy^2 \kappa$ is equal to $\kappa_0$ times a worst-case bound on $ \|  F_i \| $.
\notes{this was said above:
the $O(\epsy^2)$ bound exists since $|\zeta_t^1|\le 1$ a.s.,  and hence  $|\zeta_t|\le \epsy$.}
\end{proof}

\begin{proof}[Proof of \Lemma{t:Qt}.]
Applying \Lemma{t:FP},  
\begin{align*} 
	\sum_{i=1}^{t}\| A^{t-i} F_i\|  &= \sum_{i=1}^{t}\|  F_i^\transpose P^{t-i}_0 \|    \le  \epsy^2\frac{\kappa}{1-\varrho}  = O(\epsy^2)
\end{align*}
This implies the following approximation for $\Omega_t$:
\begin{align*}
\Omega_{t}    = A_{t} \Omega_{t-1} 
              &= (A + \clE^\transpose \zeta_{t}) \Omega_{t-1} + F_t
              		 \\
              &= A \Omega_{t-1} + \clE^\transpose  \Omega_{t-1}\zeta_t  + F_t
              		\\
              &= A^{t} \Omega_0 + \sum_{i=1} ^{t} A^{t-i} \clE^\transpose \Omega_{i-1}\zeta_i  
              	+ \sum_{i=1} ^{t} A^{t-i} F_i\\
              &= A^t(A + \clE^\transpose \zeta_0 + O(\epsy^2))   \\
              & \qquad + \sum_{i=1} ^{t} A^{t-i} \clE^\transpose \Omega_{i-1}\zeta_i  + O(\epsy^2) \\
              &= A^{t+1}  + \sum_{i=0} ^{t} A^{t-i} \clE^\transpose \Omega_{i-1}\zeta_i  + O(\epsy^2)
\end{align*}
In particular, this shows that $\Omega_t= A^{t+1} +O(\epsy)$.  Moreover, \Lemma{t:FP} gives the geometric bound   $\| A^{t-i} \clE^\transpose\| \le \kappa \varrho^{t-i}$.   This justifies substitution    $\Omega_{i-1}\zeta_i  =  A^i \zeta_i +O(\epsy^2)$ to obtain the desired result \eqref{e:Qt}. 
\end{proof}

\subsection{Approximation of $R_{V\zeta^2, \Delta}(t)$ in \eqref{e:R_VzD}}

Recall that $V_t^\transpose = \half  \piload_t \clW$,  and denote  $R_{\piload^\bullet,\Delta^\bullet}(t) = \Expect[ (\piload_t^{\bullet})^\transpose   \Delta_0^\bullet ] $.

Applying the coupling result \Proposition{t:piload-zeta},  the cross-correlation is approximated as follows:
\[
\begin{aligned}
R_{V\zeta^2, \Delta}(t)& = \Expect[\half (\piload_t \clW)^\transpose \zeta_t^2 \Delta_0]
\\
	&   =\half \clW^\transpose  \Expect[\zeta_t^2]  \Expect[ (\piload_t^{\bullet})^\transpose   \Delta_0^\bullet ] + O(\epsy^3)
\\
	&   =\half   \epsy^2 \sigma_{\zeta^1}^2  \clW^\transpose   R_{\piload^\bullet,\Delta^\bullet}(t)+ O(\epsy^3)
\end{aligned}
\]
We have $R_{\piload^\bullet,\Delta^\bullet}(t) =0$ for $t<0$, and thus $R_{V\zeta^2, \Delta}(t) = O(\epsy^3)$ for $t<0$.		

We  also have $R_{\piload^\bullet,\Delta^\bullet}(0) = \Expect[(\piload_{0}^\bullet  )^\transpose \Delta_0^\bullet] =  \Sigma^{\Delta^\bullet}$,  where $\Sigma^{\Delta^\bullet} =  \Pi_0 - P_{0}^\transpose \Pi_0 P_{0}$, with $\Pi_0 = \Pi_0$.   

For  $t \ge 1$,  
\spm{2016:  At Yue's request}
\[
\begin{aligned}
R_{\piload^\bullet,\Delta^\bullet}(t) 
&= \Expect[(\piload_{0}^\bullet P_{\zeta_0} \dots P_{\zeta_{t-1}})^\transpose \Delta_0^\bullet]
\\
& = \Expect[((\piload_{-1} ^\bullet P_{\zeta_{-1}} + \Delta_0^\bullet) P_{\zeta_0} \dots P_{\zeta_{t-1}})^\transpose \Delta_0^\bullet]
\\
& =\Expect[P_{\zeta_{t-1}}^\transpose \dots P_{\zeta_0}^\transpose (\Delta_0^\bullet)^\transpose \Delta_0^\bullet]
\\
& =\Expect[( P_0^\transpose)^t (\Delta_0^\bullet)^\transpose \Delta_0^\bullet] 
 =(P_0^\transpose)^t \Sigma^{\Delta^\bullet}\, .
\end{aligned}
\]
Substituting
$R_{\piload^\bullet,\Delta^\bullet}(t) = (P_0^\transpose)^t \Sigma^{\Delta^\bullet}
$ into the previous expression for $R_{V\zeta^2, \Delta}(t)$ gives   \eqref{e:R_VzD} for $t\ge0$.

\subsection{Proof of \Proposition{t:psdDecomp}}
\label{s:psdDecomp}
 
The proof begins with  the uniform bound,
\[
\|\psd(\theta) - 
\hapsd(\theta) \| \le  \sum_{t=-\infty}^{\infty} \| \Sigmatot(t) - \haSigmatot(t) \| ,  \quad \theta\in\Re
\]
where $\|\varble\|$ is any matrix norm.  The right hand side is finite under Assumptions A1 and A2.  It remains to obtain an estimate that is $O(\epsy^{2+\varrho})$.  We establish a slightly stronger bound,
\begin{equation}
 \sum_{t=0}^{\infty} \| \Sigmatot(t) - \haSigmatot(t) \| = O(\epsy^2\log(1/\epsy)).
\label{e:epsylogepsybdd}
\end{equation}

Under the assumption that $R_\zeta(t)\to 0$ geometrically fast, it follows that the same is true for $\haSigmatot(t)$ and $\Sigmatot(t)$:  for some $b_0<\infty$ and $\delta>0$,
\[
\| \Sigmatot(t)\| +
\| \haSigmatot(t)\| \le \exp(b_0 - \delta |t|),\qquad t\in \ZZ.
\]
Moroever,      $\mu = \mu_0+  O(\epsy^2)$,  where the first $d$ components of $\mu_0$ coincide with $\pi_0$, and the remaining are zero.  It follows that $\Sigmatot(t) = R(t) -\mu_0\mu_0^\transpose+O(\epsy^4)$.  
Consequently,  
\Theorem{t:R_D} implies that for some $b_1<\infty$,
\[
\| \Sigmatot(t) - \haSigmatot(t)\| \le \exp(b_1) \epsy^3,\qquad t\in \ZZ.
\]
To establish \eqref{e:epsylogepsybdd} we decompose the sum into two parts.  Denote
\[
N(\epsy) = \min\{ t\ge 0 : \exp(b_0 - \delta t) \le \exp(b_1) \epsy^3\}  .
\]
This implicit definition can be solved to give $N(\epsy) = [b_0 + 3b_1\log(1/\epsy)]\delta^{-1}$.

From this we obtain,
\[
\begin{aligned}
 \sum_{t=0}^{\infty} \| \Sigmatot(t) - \haSigmatot(t) \| 
  &\le   \exp(b_1) \epsy^3 N(\epsy) +  \sum_{t> N(\epsy)} \exp(b_0 - \delta t) 
  \\
  &\le   \exp(b_1) \epsy^3 N(\epsy) +   \exp(b_0 - \delta N(\epsy) )  \frac{1}{1-  \exp( -\delta ) }
  \\
  &\le   \exp(b_1) \epsy^3  N(\epsy)  +    \exp(b_1) \epsy^3  \frac{1}{1-  \exp( -\delta ) }
\end{aligned}
\]
This together with the formula for $N(\epsy)$ immediately gives the bound \eqref{e:epsylogepsybdd}.
\qed

\section{Proof of \Proposition{t:haD}}
\label{s:haD}

We first recall the representation of relative entropy as the convex dual of the log-moment generating function:  For any probability measure   $\probB$ on $\clE$ and measurable function $f\colon\clE\to\Re$, denote the log-moment generating function,
\[
\LpB(f) = \log ( \probB(e^f))
\]
For any other probability measure   $\probA$ on $(\clE,\clB)$ we have,
\[
\KL(\probA\|\probB) = \sup\{ \probA(f) -\LpB(f) \}
\]
where the supremum is over all measurable functions $f$ for which $\probA(f) $ is defined \cite[Theorem 3.1.2]{demzei98a}.

Provided the relative entropy is finite, the supremum is achieved uniquely with the log-likelihood function $f^*=\log(d\probA/d\probB)$.  The error bound (ii) in the proposition is vacuous unless $\| f^*\|_\infty<\infty$.    Consequently, to establish the error bound we can restrict to functions $f$ for which $\| f\|_\infty<\infty$.  

We apply the second order Taylor-series expansion,
\[
\LpB(f) = \log ( 1+  \probB(f) + \half \probB(f^2)) =  \probB(f) + \half \probB(f^2) + O(\| f\|_\infty^3).
\]
A quadratic approximation to relative entropy is obtained on dropping the error term.  To complete the proof we establish the following alternate expression for \eqref{e:haD}:
\[
\haKL(\probA\|\probB) \eqdef \sup\{ \probA(f) -\probB(f)-  \half\probB(f^2) \}
\]
where the supremum is over all functions $f$ whose mean is defined with respect to both $\probA$ and $\probB$.
Without loss of generality we may assume that the maximum is over all functions $f$ for which $\probB(f)=0$.  
It is not difficult to show that the maximizing function is $\haf^* =  e^{f^*}-1 = d\probA/d\probB -1$ whenever $\haKL(\probA\|\probB) $ is finite.  This establishes (i). 



We can scale by a constant  $\theta$ to obtain
\[
\haKL(\probA\|\probB) = \max_f \max_\theta\{ \theta \probA(f) -  \half \theta^2\probB(f^2) \}
\]
where the first maximum is over measurable functions $f$ satisfying $\probB(f)=0$ and  $ \probA(|f|) +\probB(f^2)<\infty$.
The optimizing value    is $\theta^*= \probA(f) /\probB(f^2)$.  Substitution leads to the  formula \eqref{e:haD}.
\qed



\end{document}